\documentclass[runningheads]{llncs}
\usepackage[T1]{fontenc}
\usepackage{graphicx} 
\usepackage{amsmath,amssymb,amsfonts}
\usepackage{algorithm}
\usepackage{algpseudocode}
\usepackage{tabularx}
\usepackage{xcolor}
\usepackage{booktabs}
\begin{document}
\title{Anamorphic Encryption with CCA Security: A Standard Model Construction}
%
%
\author{Shujun Wang\inst{1} \and
Jianting Ning\inst{2} \and 
Qinyi Li\inst{1} \and 
Leo Yu Zhang\inst{1}}
\authorrunning{S. Wang et al.} 

\institute{Griffith University, Australia \\
\email{\{shujunwang677@outlook.com, qinyi.li@griffith.edu.cn, leo.zhang@griffith.edu.au\}} \and
Wuhan University, China \\
\email{jtning88@gmail.com}}

\maketitle
                                                                                                                                                                                                                                                                                                                                                                                                                                            \begin{abstract}
Anamorphic encryption serves as a vital tool for covert communication, maintaining secrecy even during post-compromise scenarios. Particularly in the receiver-anamorphic setting, a user can shield hidden messages even when coerced into surrendering their secret keys. However, a major bottleneck in existing research is the reliance on CPA-security, leaving the construction of a generic, CCA-secure anamorphic scheme in the standard model as a persistent open challenge. To bridge this gap, we formalize the Anamorphic Key Encapsulation Mechanism (AKEM), encompassing both Public-Key (PKAKEM) and Symmetric-Key (SKAKEM) variants. We propose generic constructions for these primitives, which can be instantiated using any KEM that facilitates randomness recovery. Notably, our framework achieves strong IND-CCA (sIND-CCA) security for the covert channel. We provide a rigorous formal proof in the standard model, demonstrating resilience against a "dictator" who controls the decapsulation key. The security of our approach is anchored in the injective property of the base KEM, which ensures a unique mapping between ciphertexts and randomness. By integrating anamorphism into the KEM-DEM paradigm, our work significantly enhances the practical utility of covert channels within modern cryptographic infrastructures.

\keywords{Anamorphic encryption  \and Key encapsulation mechanism \and CCA security  \and Public key encryption.}
\end{abstract}
\section{Introduction}
The "Crypto Wars" represent a pivotal political and technological conflict with significant implications for the global digital landscape~\cite{rogaway2015moral}. At its core, the confrontation involves a vigorous debate between cryptographers and policymakers over the fundamental tension between "encryption restrictions" and "privacy protection." The role of cryptography in safeguarding privacy is fundamentally reliant on two key assumptions: Firstly, the sender-freedom assumption, which asserts that users can freely select their communication content without third-party interference; and secondly, the receiver-privacy assumption, which requires that the recipient’s private key remain strictly confidential. However, within the context of the Crypto Wars, dictator demands—such as those from powerful governments—are seeking to challenge these two foundational assumptions.

To address the disruptive challenges posed to foundational cryptographic assumptions in environments with heavy censorship, Persiano et al.~\cite{persiano2022anamorphic} introduced the paradigm of "Anamorphic Encryption (AME)." A core design principle of the paradigm acknowledges that deploying a new, dedicated encryption scheme for secret communication would readily arouse a dictator's suspicion and be subsequently blocked. Consequently, AME is ingeniously built upon existing, widely deployed Public-Key Encryption (PKE) systems. The objective is to enable communicating parties to embed a covert channel, imperceptible to the dictator, within a seemingly conventional encrypted channel for the transmission of covert information. Specifically, AME is categorized into two types: Sender-AME~\cite{wang2023sender}, which aims to counteract disruption of the sender-freedom assumption, and Receiver-AME~\cite{persiano2022anamorphic}, which focuses on scenarios where the receiver-privacy assumption is compromised. We focus on Receiver-AME, where parties supplement a standard $(pk, sk)$ pair with a pre-shared double key. Under coercive scrutiny, the receiver strategically yields only $sk$ as their purported sole secret; meanwhile, the undisclosed double key preserves the receiver's ability to recover covert messages. From the dictator's perspective, an anamorphic ciphertext that embeds a covert message must be computationally indistinguishable from a normal ciphertext generated with the public key $pk$.

In prior works, the double key was conceived as a symmetric shared secret ~\cite{persiano2022anamorphic} ~\cite{banfi2024anamorphic} ~\cite{kutylowski2023self}. Subsequently, the concept of Public-Key Anamorphic Encryption was introduced, wherein Persiano et al.~\cite{persiano2024public} proposed a variant where the double key is set to the empty string $\epsilon$ --- a design that intriguingly obviates the necessity of privately transmitting the double key to the receiver. More recently, Catalano et al.~\cite{catalano2024anamorphic} developed Fully Asymmetric Anamorphic Encryption, featuring a non-null double key structured as an asymmetric pair $(dk, tk)$.

However, regardless of whether the double key is symmetric or asymmetric, the vast majority of existing anamorphic schemes provide insufficient security guarantees for covert messages, predominantly settling for chosen-plaintext attack (CPA) security. Recent work by Banerjee et al.~\cite{banerjee2025simple} achieved Replayable Chosen-Ciphertext Attack (RCCA) security~\cite{canetti2003relaxing}~\cite{faonio2020improving} for covert messages in the asymmetric setting. However, their approach, based on the Fujisaki-Okamoto (FO) transform~\cite{fujisaki1999secure} within the Random Oracle Model (ROM)~\cite{canetti2004random}~\cite{don2019security}, suffers from limited generality and falls short of the ideal Chosen-Ciphertext Attack (CCA) security~\cite{canetti2004chosen}. Choi et al.~\cite{choi2025unified} pioneered the conceptual framework for CCA-security regarding covert messages, a significant contribution that addresses a theoretical void in anamorphic encryption. Despite its conceptual merits, the work provides neither a formalization of the construction nor a reduction-based security proof targeting that specific scheme. Furthermore, their methodology intrinsically relies on a monolithic PKE paradigm, which fundamentally diverges from modern cryptographic practice. In real-world deployments, public-key operations are almost exclusively instantiated via KEMs due to their efficiency and structural modularity in hybrid encryption. By treating the primitive as a traditional PKE, their design tightly couples the normal message $m$ and the covert message $m'$ to derive the encryption randomness $r$. This design introduces a critical vulnerability: any modification of $m$ by a dictator—even if $m'$ remains untouched—precludes the receiver from reconstructing $r$. This ``all-or-nothing'' dependency renders the covert channel excessively fragile. 
Given these cumulative limitations, the following question arises:

\begin{quote}
Is it possible to design a generic transformation for anamorphic encryption spanning both symmetric and asymmetric settings that achieves provable CCA security in the standard model without creating a fragile dependency between the normal and covert message spaces?
\end{quote}

\subsection{Our Contributions} 
In this work, we answer the question aforementioned in the affirmative sense. Our main contributions are summarized as follows:
\begin{itemize}
    \item We formalize the notions of Public-Key and Symmetric-Key Anamorphic Key Encapsulation Mechanism (PKAKEM and SKAKEM, respectively), which enable covert communication secure against a dictator holding the decapsulation key. By bridging the gap between anamorphism and the prevailing KEM-DEM paradigm, our work significantly improves the practical viability of anamorphic encryption for modern cryptographic ecosystems.
    \item We present concrete constructions for PKAKEM and SKAKEM, which can be instantiated from any KEM that supports randomness recovery. We formally prove the anamorphic security of our constructions, which ensures that normal and anamorphic ciphertexts are computationally indistinguishable from the dictator.
    \item We achieve strong Indistinguishability under Chosen-Ciphertext Attack (sIND-CCA) security for covert messages, rigorously proven in the standard model. By decoupling the normal message from the covert message, thereby overcoming the ``all-or-nothing'' dependency of prior designs and rendering covert retrieval resilient to alterations of the normal ciphertext.
\end{itemize}
\subsection{Technique Overview} 
The technical foundation of our paper lies in harmonizing the modern KEM-DEM paradigm with anamorphic cryptography. By leveraging KEM primitive~\cite{abe2005tag} ~\cite{nagao2005universally}~\cite{chen2020subvert}~\cite{dent2003designer}~\cite{saito2018tightly} as a building block, we develop PKAKEM and SKAKEM. The core technical challenge addressed in this work is the realization of strong Indistinguishability under Chosen-Ciphertext Attack (sIND-CCA) security within the standard model. In our context, "strong" security denotes that covert message confidentiality is preserved even against an adversary (e.g., a dictator) who possesses the legitimate decapsulation keys. Unlike prior attempts that suffer from an ``all-or-nothing'' dependency, our approach introduces a decoupling mechanism between the normal and covert message spaces. Such a design enables successful covert retrieval even when the normal message undergoes malicious alterations.

The security of an anamorphic system is anchored in the notion of anamorphic security. While standard KEM encapsulation utilizes fresh randomness to generate ciphertexts and keys, our approach uses this entropy as a covert channel. Specifically, we substitute the original random coins with a pseudorandom value deterministically derived from the covert message. The substitution is undetectable to a polynomial-time dictator, as the pseudorandom values effectively mimic the uniform randomness of an honest encapsulation, ensuring that the two distributions are computationally indistinguishable. In our concrete constructions, the pseudorandomness for PKAKEM is sourced from the pseudorandom properties of both the covert message's ciphertext and the accompanying Message Authentication Code (MAC), whereas SKAKEM relies on an Invertible Pseudorandom Function (IPF) to ensure the indistinguishability of its output.

However, achieving anamorphic security is merely the preliminary requirement. The core technical challenge is ensuring sIND-CCA security for covert messages against an active, decapsulation-key-holding dictator. A naive strategy might attempt to leverage the Existential Unforgeability against Chosen-Message Attacks (EUF-CMA) security of a MAC to detect ciphertext tampering; yet, this approach fails to prevent a "many-to-one" mapping problem. Specifically, an adversary could craft a related ciphertext that maps to the same internal pseudorandom coins as the challenge, causing the decapsulation oracle to leak critical information and leading to a complete collapse of CCA security. To mitigate this, Choi et al.~\cite{choi2025unified} proposed binding the normal message and the covert message within an authenticated encryption framework. However, such a rigid coupling introduces a critical fragility: any modification to the normal message by the dictator—even if the covert payload is untouched—precludes the reconstruction of the shared randomness, rendering the covert channel unusable.

Our solution is to deploy the anamorphic transformation within a KEM framework that supports randomness recovery. The intrinsic structure of such KEMs ensures a stronger, typically injective, binding between the ciphertext and the randomness used for its generation. This structural property fundamentally decouples the normal message from the covert message, breaking the rigid 'all-or-nothing' dependency that plagued prior designs. Beyond establishing a rigorous sIND-CCA security proof in the standard model, our construction achieves a seamless integration of anamorphism into mainstream cryptographic ecosystems. Such an alignment provides a novel and generalizable foundation for the eventual standardization and widespread deployment of anamorphic cryptography.

\section{Preliminaries}
\subsection{Notation}
Let $S_i$ denote the event that the adversary $\mathcal{A}$ succeeds in game $G_i$, and let $|\Pr[S_i]|$ be the probability of its occurrence. For a finite set $X$, the notation $x \xleftarrow{\$} X$ denotes that $x$ is sampled uniformly at random from $X$. Moreover, $y \leftarrow A^{\mathcal{O}_1, \mathcal{O}_2, \dots}$ indicates that $y$ is the output of the probabilistic algorithm $\mathcal{A}$ that gives the oracle access to $\mathcal{O}_1, \mathcal{O}_2$ and so on.
\subsection{Pseudorandom Function}
Let $F(k, x)$ be a Pseudorandom Function (PRF)~\cite{luby1988how}. We say that $F$ is a secure PRF if for every Probabilistic Polynomial-Time (PPT) distinguisher $\mathcal{D}$, the advantage of $\mathcal{D}$ in distinguishing between the output of $F$ (with a key $k$ chosen uniformly at random) and the output of a truly random function $f$ is negligible in the security parameter $\lambda$:
    \[
        \text{Adv}_{F,\mathcal{D}}^{\text{PRF}}(\lambda) = \left| \Pr[\mathcal{D}^{F(k, \cdot)}(1^\lambda) = 1] - \Pr[\mathcal{D}^{f(\cdot)}(1^\lambda) = 1] \right| \leq negl(\lambda).
    \]

\subsection{Message Authentication Code}
A Message Authentication Code (MAC)~\cite{bellare2000security} is a cryptographic primitive designed to ensure the integrity and authenticity of messages. Formally, a MAC scheme consists of the following three algorithms, defined over a key space $\mathcal{K}$, a message space $\mathcal{M}$ and a tag space $\mathcal{T}$:

\begin{itemize}
    \item{$k \leftarrow \mathsf{MAC.KGen}(1^\lambda$)}: The probabilistic key generation algorithm takes the security parameter $1^\lambda$ as input and returns a secret key $k \in \mathcal{K}$.
    \item{$\tau \leftarrow \mathsf{MAC.Tag}(k,m$)}: The probabilistic authentication algorithm takes a secret key $k$ and a message $m \in \mathcal{M}$ to produce an authentication tag $\tau \in \mathcal{T}$.
    \item{$b \leftarrow \mathsf{MAC.Verify}(k,m,\tau$)}: The deterministic verification algorithm takes a key $k$, a message $m$, and a tag $\tau$ as input, and outputs a bit $b \in \{0, 1\}$. Here, $b = 1$ indicates that the tag is valid, while $b = 0$ denotes invalidity.
\end{itemize}

A MAC scheme is considered secure if it provides \textit{Strong Existential Unforgeability against Chosen-Message Attacks (SUF-CMA)}~\cite{dodis2012message}. Formally, for any PPT adversary $\mathcal{A}$ with access to the authentication oracle $\mathcal{O}_{\mathsf{MAC.Tag}}(k, \cdot)$ and the verification oracle $\mathcal{O}_{\mathsf{MAC.Verify}}(k, \cdot, \cdot)$, the advantage of $\mathcal{A}$ in forging a valid message-tag pair is negligible with respect to the security parameter $\lambda$. Specifically, $\mathcal{A}$ succeeds if it outputs a pair $(m', \tau')$ such that:
\begin{itemize}
    \item $\mathsf{MAC.Verify}(k, m', \tau') = 1$;
    \item The specific pair $(m', \tau')$ was never returned by the authentication oracle $\mathcal{O}_{\mathsf{MAC.Tag}}(k, \cdot)$.
\end{itemize}

The SUF-CMA advantage of $\mathcal{A}$ is defined as:
$$
    \mathsf{Adv}_{\mathsf{MAC},\mathcal{A}}^{\mathsf{SUF\text{-}CMA}}(\lambda) = \Pr\left[ 
    \begin{array}{l}
        k \leftarrow \mathsf{MAC.KGen}(1^\lambda); \\
        (m', \tau') \leftarrow \mathcal{A}^{\mathcal{O}_{\mathsf{MAC.Tag}}(k, \cdot), \mathcal{O}_{\mathsf{MAC.Verify}}(k, \cdot, \cdot)}(1^\lambda) : \\
        \mathsf{MAC.Verify}(k, m', \tau') = 1 \land
        (m', \tau') \notin \mathcal{Q}
    \end{array}
    \right] \leq \mathsf{negl}(\lambda),
$$
where $\mathcal{Q}$ denotes the set of all message-tag pairs $(m, \tau)$ that were generated and returned by the authentication oracle $\mathcal{O}_{\mathsf{MAC.Tag}}$ during the game.

Additionally, we define the \textit{pseudorandomness} of a MAC scheme. For any PPT adversary $\mathcal{A}$, there exists a negligible function $negl(\lambda)$ such that for each security parameter $\lambda\in\mathbb{N}$, the advantage of $\mathcal{A}$ satisfies:

\begin{equation*}
\mathsf{Adv}_{\mathsf{MAC}, \mathcal{A}}^{\mathsf{pserand}}(\lambda) = 2 \cdot \left| \Pr \left[ 
\begin{array}{c}
k \leftarrow \mathsf{MAC.KGen}(1^\lambda); \\
m \leftarrow \mathcal{A}(1^\lambda); \\
\tau_0 \leftarrow \mathsf{MAC.Tag}(k, m); \\
\tau_1 \xleftarrow{\$} \{0, 1\}^{|\tau_0|}; \\
b \xleftarrow{\$} \{0, 1\}; \\
b' \leftarrow \mathcal{A}(\tau_b)
\end{array} : b = b' \right] - \frac{1}{2} \right| \leq \mathsf{negl}(\lambda).
\end{equation*}

\subsection{Public Key Encryption }
A public key encryption (PKE) scheme~\cite{canetti2003forward}  is formally defined by a tuple of three PPT algorithms $(\mathsf{PKE.KGen}, \mathsf{PKE.Enc}, \mathsf{PKE.Dec})$:

\begin{itemize}
\item{$(pk,sk)\leftarrow {\mathsf{PKE.KGen}}(1^\lambda$)}: The probabilistic key generation algorithm takes as input the security parameter $1^\lambda$ and outputs a public/secret key pair $(pk, sk)$.
\item{$C\leftarrow \mathsf{PKE.Enc}(pk,m;r$)}: The probabilistic encryption algorithm takes as input a public key $pk$, a message $m$ and a randomness $r$ to output a ciphertext $C$.
\item{$m\leftarrow \mathsf{PKE.Dec}(sk,C$)}: The deterministic decryption algorithm takes as input a secret key $sk$ and a ciphertext $C$. Then it outputs the message $m$.
\end{itemize}

\textbf{Correctness.}
    The PKE scheme satisfies correctness if for any key pair $(pk, sk)$ generated by $\mathsf{PKE.KGen}(1^\lambda$), any message $m$ in the message space $\mathcal{M}$, and any randomness $r$ used during encryption, it holds that:
    \[
    \mathsf{PKE.Dec}(sk, \mathsf{PKE.Enc}(pk, m; r)) = m.
\]

\textbf{Pseudorandomness.} A PKE scheme provides ciphertext pseudorandomness ~\cite{moller2004public} if, for any PPT adversary $\mathcal{A}$, there exists a negligible function $negl(\lambda)$ such that for each security parameter $\lambda\in\mathbb{N}$, the advantage of $\mathcal{A}$ satisfies:

\begin{equation*}
\mathsf{Adv}_{\mathsf{PKE}, \mathcal{A}}^{\mathsf{pserand}}(\lambda) = 2 \cdot \left| \Pr \left[ 
\begin{array}{c}
(pk, sk) \leftarrow \mathsf{PKE.KGen}(1^\lambda); \\
m \leftarrow \mathcal{A}(1^\lambda, pk); \\
C_0 \leftarrow \mathsf{PKE.Enc}(pk, m;r); \\
C_1 \xleftarrow{\$} \{0, 1\}^{|C_0|}; \\
b \xleftarrow{\$} \{0, 1\}; \\
b' \leftarrow \mathcal{A}(1^\lambda, pk, C_b)
\end{array} : b = b' 
\right] - \frac{1}{2} \right| \leq \mathsf{negl}(\lambda).
\end{equation*}

\textbf{IND-CCA Security.}
 A PKE scheme is IND-CCA secure if, for any PPT adversary $\mathcal{A}$, there exists a negligible function $negl(\lambda)$ such that for each security parameter $\lambda\in\mathbb{N}$, the advantage of $\mathcal{A}$ satisfies:

\[
\mathsf{Adv}^{\mathsf{IND\text{-}CCA}}_{\mathsf{PKE},\mathcal{A}}(\lambda) = 2 \cdot \left| \Pr\left[\mathsf{Expt}^{\mathsf{IND\text{-}CCA}}_{\mathsf{PKE},\mathcal{A}}(\lambda) = 1\right] - \frac{1}{2} \right| \leq \mathsf{negl}(\lambda),
\]
where the security experiment $\mathsf{Expt}^{\mathsf{IND\text{-}CCA}}_{\mathsf{PKE},\mathcal{A}}(\lambda)$ is formally defined as follows. It is mandated that $|m_0| = |m_1|$ and that $\mathcal{A}$ is strictly prohibited from querying the decryption oracle $\mathcal{O}_{\mathsf{Dec}}(sk, \cdot)$ on the challenge ciphertext $C_b$.

\begin{center}
\begin{minipage}{0.9\linewidth}
$\mathsf{Expt}^{\mathsf{IND\text{-}CCA}}_{\mathsf{PKE},\mathcal{A}}(\lambda)$
\hrule 
\vspace{0.5em}
\begin{algorithmic}[1]
    \State $(pk, sk) \leftarrow \mathsf{PKE.KGen}(1^\lambda)$
    \State $(m_0, m_1) \leftarrow \mathcal{A}^{\mathcal{O}_{\mathsf{Dec}}(sk, \cdot)}(pk)$ \Comment{Require $|m_0| = |m_1|$}
    \State $b \xleftarrow{\$} \{0, 1\}$
    \State $C_b\leftarrow \mathsf{PKE.Enc}(pk, m_b)$
    \State $b' \leftarrow \mathcal{A}^{\mathcal{O}_{\mathsf{Dec}}(sk, \cdot)}(C_b)$ \Comment{Cannot query $C_b$ to $\mathcal{O}_{\mathsf{Dec}}$}
    \State \Return $1$ if $b = b'$; otherwise, return $0$.
\end{algorithmic}
\end{minipage}
\end{center}

\subsection{Randomness-Recoverable Key Encapsulation Mechanisms}
A Randomness-Recoverable Key Encapsulation Mechanism (RR-KEM) is defined by a tuple of three polynomial-time algorithms over a session key space $\mathcal{K}$, a ciphertext space $\mathcal{C}$, and a randomness space $\mathcal{R}$:

\begin{itemize}
    \item{$(ek,dk) \leftarrow {\mathsf{KEM}_{RR}.\mathsf{KGen}}(1^\lambda$)}: The probabilistic key generation algorithm takes the security parameter $1^\lambda$ as input and outputs an encapsulation key $ek$ and a decapsulation key $dk$. 
    
    \item{$(K,C) \leftarrow {\mathsf{KEM}_{RR}.\mathsf{Encaps}}(ek,r_e$)}: The probabilistic encapsulation algorithm takes as input an  encapsulation key $ek$ and explicitly uses randomness $r_e \in \mathcal{R}$. It outputs a session key $K \in \mathcal{K}$ and a ciphertext $C\in \mathcal{C}$.
    \item{$(K,r_e)/\bot \leftarrow {\mathsf{KEM}_{RR}.\mathsf{Decaps}}(dk,C$)}: The deterministic decapsulation algorithm takes as input a decapsulation key $dk$ and a ciphertext $C$. It outputs the recovered session key $K$ and the underlying randomness $r_e$, or a rejection symbol $\bot$ if decapsulation fails.
\end{itemize}

\textbf{Correctness.}
 RR-KEM satisfies the correctness if, for any key pair $(ek, dk)$ generated by $\mathsf{KEM}_{RR}.\mathsf{KGen}(1^\lambda$)and any randomness $r_e \in \mathcal{R}$ used during encapsulation, it holds that:
    $$\mathrm{Pr}[\mathsf{KEM}_{RR}.\mathsf{Decaps}(dk,C) = (K,r_e) : (K,C) \leftarrow {\mathsf{KEM}_{RR}.\mathsf{Encaps}}(ek,r_e)] = 1.$$

\subsection{Invertible Pseudorandom Functions}
An Invertible Pseudorandom Function (IPF)~\cite{boneh2017constrained}  $F_{\mathsf{inv}}$ is defined over a key space $\mathcal{K}_I$, a domain $\mathcal{X}_I$ and a range $\mathcal{Y}_I$: $\mathcal{K}_I \times \mathcal{X}_I\rightarrow\mathcal{Y}_I.$ It comprises a probabilistic setup algorithm and two deterministic evaluation algorithms:
\begin{itemize}
    \item $k_I \leftarrow \mathsf{IPF.Setup}(1^\lambda)$: Takes the security parameter $1^\lambda$ as input and outputs a key $k_I \in \mathcal{K}_I$.
    \item $y \leftarrow F_{\mathsf{inv}}(k_I, x)$: Takes a key $k_I$ and an element $x \in \mathcal{X}_I$ as input, and outputs an element $y \in \mathcal{Y}_I$.
    \item $x / \bot \leftarrow F_{\mathsf{inv}}^{-1}(k_I, y)$: Takes a key $k_I$ and an element $y \in \mathcal{Y}_I$ as input, and outputs either an element $x \in \mathcal{X}_I$ or a rejection symbol $\bot$.
\end{itemize}

\textbf{Injectivity.} For every security parameter $\lambda$ and any key $k_I$ generated by $\mathsf{IPF.Setup}(1^\lambda)$, the function $F_{\mathsf{inv}}(k_I, \cdot)$ is strictly injective from $\mathcal{X}_I$ to $\mathcal{Y}_I$.

\textbf{Security.} An IPF is considered secure if, for any PPT adversary $\mathcal{A}$, its advantage in distinguishing $F_{\mathsf{inv}}$ from a truly random injective function is negligible:
\begin{equation*}
\begin{aligned}
\mathsf{Adv}^{\mathsf{IPF}}_{F_{\mathsf{inv}}, \mathcal{A}}(\lambda) 
&= \left| \Pr \left[ k_I \leftarrow \mathsf{IPF.Setup}(1^\lambda) : \mathcal{A}^{F_{\mathsf{inv}}(k_I, \cdot), F_{\mathsf{inv}}^{-1}(k_I, \cdot)}(1^\lambda) = 1 \right] \right. \\
&\quad - \left. \Pr \left[ R \xleftarrow{\$} \mathsf{InjFuns}[\mathcal{X}_I, \mathcal{Y}_I] : \mathcal{A}^{R(\cdot), R^{-1}(\cdot)}(1^\lambda) = 1 \right] \right| \leq \mathsf{negl}(\lambda),
\end{aligned}
\end{equation*}
where $\mathsf{InjFuns}[\mathcal{X}_I, \mathcal{Y}_I]$ denotes the set of all injective functions mapping from $\mathcal{X}_I$ to $\mathcal{Y}_I$. For any sampled random function $R$, its inverse $R^{-1}$ is defined such that it maps elements from $\mathcal{Y}_I$ back to $\mathcal{X}_I \cup \{\bot\}$. Specifically, $R^{-1}(y) = x$ if $R(x) = y$; otherwise, it returns $\bot$. Furthermore, when the domain and range coincide (i.e., $\mathcal{X}_I = \mathcal{Y}_I$), $\mathsf{InjFuns}[\mathcal{X}_I, \mathcal{Y}_I]$ represents the set of all permutations over $\mathcal{X}_I$.

\section{Anamorphic Key Encapsulation Mechanism}
\subsection{Public Key Anamorphic KEM}
\begin{definition}[Public Key Anamorphic KEM]
    A public key anamorphic KEM (PKAKEM) with a covert message space $\mathcal{M'}$ is defined by a tuple of three algorithms:
    \begin{itemize}
    \item{$(ek,dk,dk',tk') \leftarrow \mathsf{PKAKEM.aGen}(1^\lambda)$}: The probabilistic anamorphic key generation algorithm takes the security parameter $1^\lambda$ as input, and outputs a pair of standard encapsulation/decapsulation keys $(ek, dk)$, along with a public/secret key pair $(dk', tk')$ for the receiver of the covert channel.
    
    \item{$(K,act) \leftarrow \mathsf{PKAKEM.aEnc}(1^\lambda,ek,dk',amsg)$}: The probabilistic anamorphic encryption algorithm takes as input the security parameter $1^\lambda$, an encapsulation key $ek$, a covert receiver's public key $dk'$ and a covert message $amsg \in \mathcal{M'}$. It outputs a standard session key $K$ and an anamorphic ciphertext $act$.
    
    \item{$amsg/\bot \leftarrow \mathsf{PKAKEM.aDec}(dk, tk', act)$}: The deterministic anamorphic decryption algorithm takes as input a standard decapsulation key $dk$, a covert trapdoor key $tk'$, and an anamorphic ciphertext $act$. It outputs the recovered covert message $amsg$, or an error symbol $\bot$ if extraction fails.
\end{itemize}
\end{definition}

\textbf{Correctness.} 
PKAKEM satisfies correctness if, for any covert message $amsg\in\mathcal{M'}$, the probability of decapsulation failure is negligible:
\[
\mathrm{Pr}
\left[
\mathrm{amsg} \neq \mathsf{PKAKEM.aDec}(dk, tk', act) :
\begin{array}{l}
    (ek,dk,dk',tk') \leftarrow \mathsf{PKAKEM.aGen}(1^\lambda); \\
    (K,act) \leftarrow \mathsf{PKAKEM.aEnc}(1^\lambda,ek,dk',amsg)
\end{array}
\right]
\]
is negligible.

\textbf{Anamorphic security.} $\mathsf{PKAKEM}$ achieves anamorphic security (i.e., indistinguishability between normal and anamorphic ciphertexts) if, for any PPT adversary $\mathcal{A}$, the following advantage is negligible in $\lambda$:
\[
\mathsf{Adv}^{\mathsf{Ana\text{-}Security}}_{\mathsf{PKAKEM}, \mathcal{A}}(\lambda) = \left| \Pr[\mathsf{RealG}_{\mathsf{PKAKEM}}(\lambda, \mathcal{A}) = 1] - \Pr[\mathsf{AnamorphicG}_{\mathsf{PKAKEM}}(\lambda, \mathcal{A}) = 1] \right| \leq \mathsf{negl}(\lambda)
,\]
where the security games $\mathsf{RealG}_{\mathsf{PKAKEM}}$ and $\mathsf{AnamorphicG}_{\mathsf{PKAKEM}}$ are defined as follows:

\vspace{0.4cm}
\noindent
{\renewcommand{\arraystretch}{1.6}
\begin{tabular}{|l|}
    \hline
    $\mathsf{RealG}_{\mathsf{PKAKEM}}(\lambda, \mathcal{A})$ \\
    \hline
    $(ek, dk) \xleftarrow{\$} \mathsf{KEM}_{RR}.\mathsf{KGen}(1^\lambda)$ \\
    \textbf{return } $\mathcal{A}^{\mathcal{O}_e(ek, \cdot)}(ek, dk) \text{ where } \mathcal{O}_e(ek, amsg) \text{ computes } r_e \xleftarrow{\$} \mathcal{R} \text{ and returns } \mathsf{KEM}_{RR}.\mathsf{Encaps}(ek, r_e)$ \\
    \hline
\end{tabular}
}

\vspace{0.6cm}
\noindent
{\renewcommand{\arraystretch}{1.6}
\begin{tabular}{|l|}
    \hline
    $\mathsf{AnamorphicG}_{\mathsf{PKAKEM}}(\lambda, \mathcal{A})$ \\
    \hline
    $(ek, dk, dk', tk') \xleftarrow{\$} \mathsf{PKAKEM.aGen}(1^\lambda)$ \\
    \textbf{return } $\mathcal{A}^{\mathcal{O}_a(ek, dk', \cdot)}(ek, dk) \text{ where } \mathcal{O}_a(ek, dk', amsg) \text{ returns } \mathsf{PKAKEM.aEnc}(1^\lambda, ek, dk', amsg)$ \\
    \hline
\end{tabular}
}
\vspace{0.2cm}

\textbf{sIND-CCA security.}
PKAKEM achieves sIND-CCA security if, for any PPT adversary $\mathcal{A}$, there exists a negligible function $negl(\lambda)$ such that the advantage of $\mathcal{A}$ satisfies:
\[
\mathsf{Adv}^{\mathsf{sIND\text{-}CCA}}_{\mathsf{PKAKEM},\mathcal{A}}(\lambda) = 2 \cdot \left| \Pr[\mathsf{Expt}^{\mathsf{sIND\text{-}CCA}}_{\mathsf{PKAKEM},\mathcal{A}}(\lambda) = 1] - 1/2 \right| \leq \mathsf{negl}(\lambda),
\]
where the security experiment $\mathsf{Expt}^{\mathsf{sIND\text{-}CCA}}_{\mathsf{PKAK},\mathcal{A}}(\lambda)$ is formally defined as follows:

\begin{center}
\begin{minipage}{0.9\linewidth}
$\mathsf{Expt}^{\mathsf{sIND\text{-}CCA}}_{\mathsf{PKAKEM},\mathcal{A}}(\lambda)$
\hrule 
\vspace{0.5em}
\begin{algorithmic}[1]
    \State $(ek, dk, dk', tk') \leftarrow \mathsf{PKAKEM.aGen}(1^\lambda)$
    \State $(amsg_0, amsg_1) \leftarrow \mathcal{A}^{\mathsf{PKAKEM.aDec}(dk, tk', \cdot)}(ek, dk, dk')$ \Comment{$\mathcal{A}$ knows $dk$}
    \State $\beta \xleftarrow{\$} \{0, 1\}$
    \State $act_\beta \leftarrow \mathsf{PKAKEM.aEnc}(1^\lambda, ek, dk', amsg_\beta)$
    \State $\beta' \leftarrow \mathcal{A}^{\mathsf{PKAKEM.aDec}(dk, tk', \cdot)}(act_\beta, ek, dk, dk')$
    \State \Return 1 if $\beta = \beta'$ and $act_\beta$ was never queried to $\mathsf{aDec}$; otherwise 0.
\end{algorithmic}
\end{minipage}
\end{center}

\subsubsection{Specific Construction.}    
Let $\mathsf{PKE}$ denote a public key encryption scheme that provides ciphertext pseudorandomness and IND-CCA security. Let MAC be a message authentication code scheme that is pseudorandom and EUF-CMA secure. Let $\mathsf{KEM}_{RR}$ be a randomness-recoverable KEM. 

The (Encode, Decode) pair constitutes an invertible transformation designed to preserve the statistical properties of its input. Specifically, for any input $c'$, the encoded output $r_e = \mathsf{Encode}(c')$ is computationally indistinguishable from a uniform random string in the randomness space $\mathcal{R}$. Furthermore, the transformation is perfectly reversible, such that $\mathsf{Decode}(\mathsf{Encode}(c')) = c'$ holds for any valid input $c'$.

\begin{center}
\begin{minipage}{0.9\linewidth}
$\mathsf{PKAKEM.aGen}(1^\lambda)$
\hrule 
\vspace{0.5em}
\begin{algorithmic}[1]
    \State $(dk', tk') \gets \mathsf{PKE}.\mathsf{Gen}(1^\lambda)$
    \State $(ek, dk) \leftarrow \mathsf{KEM}_{RR}.\mathsf{KGen}(1^\lambda)$
    \State \Return $(ek, dk, dk', tk')$ 
\end{algorithmic}
\end{minipage}
\end{center}

\begin{center}
\begin{minipage}{0.9\linewidth}
$\mathsf{PKAKEM.aEnc}(1^\lambda, ek, dk', amsg)$
\hrule
\vspace{0.5em}
\begin{algorithmic}[1]
    \State $mak \gets \mathsf{MAC.KGen}(1^\lambda)$
    \State $c \gets \mathsf{PKE}_.\mathsf{Enc}(dk', amsg \parallel mak)$
    \State $\tau \gets \mathsf{MAC.Tag}(mak, c)$
    \State $c' \gets (c, \tau)$
    \State $r_e \gets \mathsf{Encode}(c')$
    \State $(K, act) \gets \mathsf{KEM}_{RR}.\mathsf{Encaps}(ek; r_e)$
    \State \Return $(K, act)$ 
\end{algorithmic}
\end{minipage}
\end{center}

\begin{center}
\begin{minipage}{0.9\linewidth}
$\mathsf{PKAKEM.aDec}(dk, tk', act)$
\hrule
\vspace{0.5em}
\begin{algorithmic}[1]
    \State $(K, r_e) \gets \mathsf{KEM}_{RR}.\mathsf{Decaps}(dk, act)$
    \State $c' \gets \mathsf{Decode}(r_e)$
    \State $\mathsf{Parse} \ c' \ \mathsf{as} \ (c, \tau)$
    \State $amsg \parallel mak \gets \mathsf{PKE}.\mathsf{Dec}(tk', c)$
    \State \textbf{if} $\mathsf{MAC.Verify}(mak, c, \tau) = 0$ \textbf{then}
    \State \hspace{\algorithmicindent} \Return $\perp$
    \State \Return $amsg$
\end{algorithmic}
\end{minipage}
\end{center}

\subsection{Security Analysis of PKAKEM}
\begin{theorem}
If $\mathsf{PKE}$ provides ciphertext pseudorandomness and $\mathsf{MAC}$ provides pseudorandomness, then the $\mathsf{PKAKEM}$ scheme achieves anamorphic security. 
\end{theorem}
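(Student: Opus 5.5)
The plan is a hybrid argument over the oracle queries. In $\mathsf{AnamorphicG}_{\mathsf{PKAKEM}}$ each query produces $(K,act)=\mathsf{KEM}_{RR}.\mathsf{Encaps}(ek;r_e)$ with $r_e=\mathsf{Encode}(c,\tau)$. In $\mathsf{RealG}_{\mathsf{PKAKEM}}$ the same computation uses $r_e\xleftarrow{\$}\mathcal{R}$. Everything the adversary sees, including $ek$ and $dk$, is produced by $\mathsf{KEM}_{RR}$ in both games, and $\mathsf{KEM}_{RR}$ is used identically apart from the source of $r_e$. It therefore suffices to show that the sequence of coins $r_e$ fed to $\mathsf{Encaps}$ is computationally indistinguishable from uniform. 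This must hold even though $\mathcal{A}$ knows $dk$, and so can recover every $r_e$ via decapsulation, but does not know $tk'$.

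\textbf{Hybrids.} Let $q$ be a polynomial bound on the number of oracle queries. I define games $G_0=\mathsf{AnamorphicG}$, then $G_1$, $G_2$, $G_3$, and finally $G_4=\mathsf{RealG}$.

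In $G_1$, the PKE ciphertext $c$ in every query is replaced by a uniform string of length $|c|$. I move through the queries one at a time via $q$ sub-hybrids. The reduction $\mathcal{B}$ to PKE ciphertext pseudorandomness receives $dk'$ as $pk$ and generates $(ek,dk)$ itself. For the $i$-th query it chooses a fresh $mak$, submits $amsg\parallel mak$ as its challenge message, and continues the simulation with the returned $C_b$. Queries before $i$ use random strings and queries after $i$ use real encryptions under $dk'$. This costs at most $q\cdot\mathsf{Adv}^{\mathsf{pserand}}_{\mathsf{PKE}}$. No $tk'$ is needed, because the dictator has no decryption oracle in this game.

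In $G_2$, every tag $\tau$ is replaced by a uniform string of length $|\tau|$. This again uses $q$ sub-hybrids, reducing to MAC pseudorandomness with the adversary choosing the message $c$. It is legitimate because in $G_1$ the key $mak$ no longer appears anywhere except inside $\mathsf{MAC.Tag}$, since $c$ is now random and independent of $mak$. The cost is at most $q\cdot\mathsf{Adv}^{\mathsf{pserand}}_{\mathsf{MAC}}$.

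In $G_3$, $r_e=\mathsf{Encode}(c,\tau)$ is replaced by a uniform $r_e\in\mathcal{R}$, using the stated property of $(\mathsf{Encode},\mathsf{Decode})$ that the encoding of its input is indistinguishable from uniform in $\mathcal{R}$. Here the input $(c,\tau)$ is itself uniform, which makes this the most favourable case. Again a hybrid over the $q$ queries is used.

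$G_3$ is exactly $\mathsf{RealG}$, since the $amsg$ input is ignored and the key generation for $(ek,dk)$ has the same distribution. The extra pair $(dk',tk')$ produced by $\mathsf{aGen}$ remains internal to the game. Summing the three hybrid losses bounds the anamorphic-security advantage by $q(\mathsf{Adv}^{\mathsf{pserand}}_{\mathsf{PKE}}+\mathsf{Adv}^{\mathsf{pserand}}_{\mathsf{MAC}}+\epsilon_{\mathsf{Encode}})$, which is negligible.

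\textbf{Main obstacle.} The step I expect to be delicate is the last one. It requires making precise what the assumption on $\mathsf{Encode}$ means, namely that it maps uniform strings of length $|c|+|\tau|$ to values statistically or computationally close to uniform over $\mathcal{R}$. This is the reading I adopt; ideally $\mathsf{Encode}$ would be a bijection onto $\mathcal{R}$, in which case $G_2$ and $G_3$ are identical. The other point to handle is that the pseudorandomness games are single-challenge, which is why I use query-by-query hybrids rather than a single reduction. Care is needed that each reduction can simulate the remaining queries: it can, because $dk'$ is public and each $mak$ is freshly sampled per query.
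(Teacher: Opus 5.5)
Your proposal is correct and follows the paper's own game sequence. First you replace the PKE ciphertext $c$ by a random string (PKE pseudorandomness). Then you replace the tag $\tau$ (MAC pseudorandomness, which is legitimate only once $mak$ is no longer encrypted). Finally you argue that $\mathsf{Encode}$ of a uniform string is uniform over $\mathcal{R}$.

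The differences are in rigor, and they favour you. The paper's single reductions $\mathcal{B}_1,\mathcal{B}_2$ feed every oracle query into what are defined as single-challenge games. Its $\mathcal{B}_2$ also tags all queries under one hidden MAC key, although $G_1$ uses a fresh $mak$ per query. Your query-by-query hybrids, with the challenge key embedded only in the $i$-th query, are sound. They come at the honest cost of a factor $q$ and an explicit $\epsilon_{\mathsf{Encode}}$ term, which vanishes if $\mathsf{Encode}$ is a bijection onto $\mathcal{R}$, as the paper implicitly assumes.
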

\begin{proof}
We proceed via a sequence of computationally indistinguishable games $G_0, G_1$, and $G_2$. Let $S_i$ denote the event that the adversary $\mathcal{A}$ succeeds (i.e., outputs $1$) in game $G_i$, and let $\Pr[S_i]$ be the probability of its occurrence. 

\textbf{Game $G_0$:} This game corresponds exactly to the $\mathsf{AnamorphicG}_{\mathsf{PKAKEM}}(\lambda, \mathcal{A})$ game. The challenger answers $\mathcal{A}$'s oracle queries using the true anamorphic encryption algorithm. Thus:
$$
\Pr[\mathsf{AnamorphicG}_{\mathsf{PKAKEM}}(\lambda, \mathcal{A}) = 1] = \Pr[S_0].
$$

\textbf{Game $G_1$:} This game is identical to $G_0$, except that the oracle replaces the valid PKE ciphertext $c$ with a uniformly random string. Specifically, upon receiving a query $amsg$, the oracle computes $mak \leftarrow \mathsf{MAC.KGen}(1^\lambda)$, samples $c \xleftarrow{\$} \{0,1\}^{l_c}$ (where $l_c$ is the ciphertext length), computes $\tau \leftarrow \mathsf{MAC.Tag}(mak, c)$, encodes $r_e \leftarrow \mathsf{Encode}(c \parallel \tau)$, and returns $(K, act) \leftarrow \mathsf{KEM}_{RR}.\mathsf{Encaps}(ek; r_e)$.
\begin{lemma}
  For any PPT adversary $\mathcal{A}$, there exists a PPT algorithm $\mathcal{B}_1$ such that:
  $$
  | \Pr[S_0] - \Pr[S_1] | = \mathsf{Adv}^{\mathsf{pserand}}_{\mathsf{PKE}, \mathcal{B}_1}(\lambda).
  $$
\end{lemma}
\begin{proof}
    We construct a reduction algorithm $\mathcal{B}_1$ that plays the PKE pseudorandomness game against a challenger $\mathcal{C}_{\mathsf{PKE}}$.
    
    \textbf{Setup Phase:} $\mathcal{C}_{\mathsf{PKE}}$ generates $(pk, sk) \leftarrow \mathsf{PKE.KGen}(1^\lambda)$ and sends $pk$ to $\mathcal{B}_1$. $\mathcal{B}_1$ implicitly sets the anamorphic public key $dk' := pk$. It generates the standard KEM keys $(ek, dk) \leftarrow \mathsf{KEM}_{RR}.\mathsf{KGen}(1^\lambda)$ and invokes $\mathcal{A}^{\mathcal{O}_a(\cdot)}(ek, dk, dk')$.
    
    \textbf{Query Phase:} When $\mathcal{A}$ submits a query $amsg$, $\mathcal{B}_1$ simulates the oracle $\mathcal{O}_a$:
    \begin{enumerate}
        \item $\mathcal{B}_1$ generates $mak \leftarrow \mathsf{MAC.KGen}(1^\lambda)$.
        \item $\mathcal{B}_1$ constructs the message $m = amsg \parallel mak$ and submits it to $\mathcal{C}_{\mathsf{PKE}}$. 
        \item $\mathcal{C}_{\mathsf{PKE}}$ internally flips a coin $b \xleftarrow{\$} \{0,1\}$. It computes $C_0 \leftarrow \mathsf{PKE.Enc}(pk, m)$ and $C_1 \xleftarrow{\$} \{0,1\}^{|C_0|}$, and returns the challenge $C_b$ to $\mathcal{B}_1$.
        \item $\mathcal{B}_1$ receives $C_b$ and computes $\tau^* \leftarrow \mathsf{MAC.Tag}(mak, C_b)$.
        \item $\mathcal{B}_1$ computes $r_e^* \leftarrow \mathsf{Encode}(C_b \parallel \tau^*)$ and returns $(K, C) \leftarrow \mathsf{KEM}_{RR}.\mathsf{Encaps}(ek; r_e^*)$ to $\mathcal{A}$.
    \end{enumerate}
    
    \textbf{Guess Phase:} Eventually, $\mathcal{A}$ outputs a guess $b'$. $\mathcal{B}_1$ outputs $b'$ as its guess for $b$.
    
    If $\mathcal{C}_{\mathsf{PKE}}$ chose $b = 0$, $C_b$ is a valid ciphertext, and $\mathcal{A}$'s view is perfectly identical to $G_0$. If $\mathcal{C}_{\mathsf{PKE}}$ chose $b = 1$, $C_b$ is a uniformly random string, making $\mathcal{A}$'s view perfectly identical to $G_1$. According to a standard cryptographic identity, the difference in $\mathcal{A}$'s output probabilities between these two views is exactly the advantage function $2 \cdot |\Pr[b=b'] - 1/2|$ of $\mathcal{B}_1$. Thus, the difference $| \Pr[S_0] - \Pr[S_1] |$ is exactly bounded by $\mathsf{Adv}^{\mathsf{pserand}}_{\mathsf{PKE}, \mathcal{B}_1}(\lambda)$.
\end{proof}
\textbf{Game $G_2$:} This game modifies $G_1$ by also replacing the MAC tag $\tau$ with a uniformly random string. Upon a query $amsg$, the oracle samples $c \xleftarrow{\$} \{0,1\}^{l_c}$ and $\tau \xleftarrow{\$} \{0,1\}^{l_\tau}$ (where $l_\tau$ is the tag length), encodes $r_e \leftarrow \mathsf{Encode}(c \parallel \tau)$, and returns $(K, act) \leftarrow \mathsf{KEM}_{RR}.\mathsf{Encaps}(ek; r_e)$.

\begin{lemma}
     For any PPT adversary $\mathcal{A}$, there exists a PPT algorithm $\mathcal{B}_2$ such that:
     $$
     | \Pr[S_1] - \Pr[S_2] | = \mathsf{Adv}^{\mathsf{pserand}}_{\mathsf{MAC}, \mathcal{B}_2}(\lambda).
     $$
\end{lemma}
\begin{proof}
    We construct $\mathcal{B}_2$ interacting with a MAC pseudorandomness challenger $\mathcal{C}_{\mathsf{MAC}}$, which holds a hidden key $k \leftarrow \mathsf{MAC.KGen}(1^\lambda)$.
    
    \textbf{Setup Phase:} $\mathcal{B}_2$ independently generates $(ek, dk) \leftarrow \mathsf{KEM}_{RR}.\mathsf{KGen}(1^\lambda)$ and $(dk', tk') \leftarrow \mathsf{PKE.KGen}(1^\lambda)$. It then invokes $\mathcal{A}^{\mathcal{O}_a(\cdot)}(ek, dk, dk')$.
    
    \textbf{Query Phase:} For each query $amsg$ from $\mathcal{A}$, $\mathcal{B}_2$ simulates the oracle:
    \begin{enumerate}
        \item $\mathcal{B}_2$ samples a random string $c^* \xleftarrow{\$} \{0,1\}^{l_c}$ and submits $m = c^*$ to $\mathcal{C}_{\mathsf{MAC}}$.
        \item $\mathcal{C}_{\mathsf{MAC}}$ internally flips a coin $b \xleftarrow{\$} \{0,1\}$. It computes $\tau_0 \leftarrow \mathsf{MAC.Tag}(k, m)$ and $\tau_1 \xleftarrow{\$} \{0,1\}^{|\tau_0|}$, and returns the challenge $\tau_b$ to $\mathcal{B}_2$.
        \item $\mathcal{B}_2$ receives $\tau_b$, computes $r_e^* \leftarrow \mathsf{Encode}(c^* \parallel \tau_b)$, and returns $(K, C) \leftarrow \mathsf{KEM}_{RR}.\mathsf{Encaps}(ek; r_e^*)$ to $\mathcal{A}$.
    \end{enumerate}
    
    \textbf{Guess Phase:} $\mathcal{A}$ outputs $b'$. $\mathcal{B}_2$ outputs $b'$ as its guess for $b$.
    
    \textbf{Analysis:} If $b = 0$, $\tau_b$ is a valid MAC tag evaluated on the random string $c^*$, which perfectly matches $G_1$. If $b = 1$, $\tau_b$ is drawn uniformly at random, which perfectly matches $G_2$. Therefore, the difference $| \Pr[S_1] - \Pr[S_2] |$ is precisely bounded by $\mathcal{B}_2$'s advantage $\mathsf{Adv}^{\mathsf{pserand}}_{\mathsf{MAC}, \mathcal{B}_2}(\lambda)$.
\end{proof}

\textbf{Conclusion:} 
In $G_2$, the input to the encoding function is a bit-string $(c \parallel \tau)$ where both components are sampled uniformly at random ($c \xleftarrow{\$} \{0,1\}^{l_c}$ and $\tau \xleftarrow{\$} \{0,1\}^{l_\tau}$). By the implicit definition of Randomness-Recoverable KEMs, the mapping $\mathsf{Encode}$ deterministically transforms a uniformly distributed bit-string into uniformly distributed randomness $r_e \xleftarrow{\$} \mathcal{R}$. 

Consequently, the simulated oracle in $G_2$ is mathematically identical to an oracle that directly samples $r_e \xleftarrow{\$} \mathcal{R}$ and returns $\mathsf{KEM}_{RR}.\mathsf{Encaps}(ek; r_e)$, which perfectly corresponds to the $\mathsf{RealG}_{\mathsf{PKAKEM}}(\lambda, \mathcal{A})$ game. 
Therefore:
$$
\Pr[S_2] = \Pr[\mathsf{RealG}_{\mathsf{PKAKEM}}(\lambda, \mathcal{A}) = 1].
$$

By summing the probability bounds across the sequence of games using the triangle inequality, the adversary's total advantage is:
$$
\begin{aligned}
\mathsf{Adv}^{\mathsf{Ana\text{-}Security}}_{\mathsf{PKAKEM},\mathcal{A}}(\lambda) &= | \Pr[S_0] - \Pr[S_2] | \\
&\leq | \Pr[S_0] - \Pr[S_1] | + | \Pr[S_1] - \Pr[S_2] | \\
&= \mathsf{Adv}^{\mathsf{pserand}}_{\mathsf{PKE}, \mathcal{B}_1}(\lambda) + \mathsf{Adv}^{\mathsf{pserand}}_{\mathsf{MAC}, \mathcal{B}_2}(\lambda).
\end{aligned}
$$

Since both the $\mathsf{PKE}$ and $\mathsf{MAC}$ schemes provide pseudorandomness, their respective advantage functions are negligible in $\lambda$. Therefore, the total distinguishing advantage is negligible, which completes the proof.

\end{proof}

\begin{theorem}
Assume $\mathsf{PKE}$ is an IND-CCA secure public-key encryption scheme, and $\mathsf{MAC}$ is a SUF-CMA message authentication code. Then the $\mathsf{PKAKEM}$ scheme achieves sIND-CCA security in the standard model.
\end{theorem}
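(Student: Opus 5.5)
We reduce to the IND-CCA security of $\mathsf{PKE}$ and the SUF-CMA security of $\mathsf{MAC}$ through a short game sequence. The main obstacle is that $\mathcal{A}$ holds $dk$. It can therefore run $\mathsf{KEM}_{RR}.\mathsf{Decaps}(dk, act_\beta)$ to obtain $r_e^*$, and hence $c'^* = (c^*, \tau^*) = \mathsf{Decode}(r_e^*)$. It can then build related anamorphic ciphertexts at will by re-encapsulating any $\mathsf{Encode}(c, \tau)$ it likes.

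We first argue that the only queries that matter are those with $act \neq act_\beta$ whose recovered pair $(c, \tau)$ differs from $(c^*, \tau^*)$. Because $\mathsf{Encaps}(ek; \cdot)$ is deterministic and $\mathsf{Decode}$ is the inverse of $\mathsf{Encode}$, the pair $(c, \tau)$ determines $r_e$, which determines the honest ciphertext. If a query $act \neq act_\beta$ decapsulated to the same $r_e^*$, its answer would equal the decryption of $act_\beta$. This is exactly the ``many-to-one'' situation we must rule out. We rule it out by invoking the injectivity of the base $\mathsf{KEM}_{RR}$, i.e.\ that a ciphertext is uniquely determined by its recovered randomness (e.g.\ via a re-encapsulation check). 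This is the one step where we rely on a structural property of $\mathsf{KEM}_{RR}$ beyond correctness, and we would state it explicitly as an assumption. Under it, every admissible query yields $(c, \tau) \neq (c^*, \tau^*)$. Such a query either has $c \neq c^*$, or has $c = c^*$ and $\tau \neq \tau^*$.

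\textbf{Game $G_0$} is the real $\mathsf{Expt}^{\mathsf{sIND\text{-}CCA}}_{\mathsf{PKAKEM},\mathcal{A}}$. In \textbf{Game $G_1$}, the decryption oracle returns $\perp$ on every query with $c = c^*$ and $\tau \neq \tau^*$. Let $\mathsf{Bad}$ be the event that such a query would have passed $\mathsf{MAC.Verify}(mak^*, c^*, \tau)$ in $G_0$. Since $\mathsf{PKE.Dec}(tk', c^*) = amsg_\beta \parallel mak^*$ by correctness, $|\Pr[S_0] - \Pr[S_1]| \le \Pr[\mathsf{Bad}_1]$. In \textbf{Game $G_2$}, the challenger sets $c^* \gets \mathsf{PKE.Enc}(dk', 0^{|amsg|} \parallel \widetilde{mak})$ for an independent key $\widetilde{mak}$. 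It still computes $\tau^* \gets \mathsf{MAC.Tag}(mak^*, c^*)$, and it still checks $\mathsf{Bad}$ using $mak^*$.

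The difference between $G_1$ and $G_2$, both in $\Pr[S_i]$ and in $\Pr[\mathsf{Bad}_i]$, is bounded by an IND-CCA adversary $\mathcal{B}$ against $\mathsf{PKE}$. $\mathcal{B}$ receives $dk'$ and samples $(ek, dk)$ itself. It picks $mak^*$ and submits the challenge messages $amsg_\beta \parallel mak^*$ and $0^{|amsg|} \parallel \widetilde{mak}$. On receiving $c^*$, it computes $\tau^*$ and $act_\beta$. It answers each decryption query as follows: recover $(c, \tau)$ using $dk$; if $c \neq c^*$, use its own $\mathcal{O}_{\mathsf{Dec}}$ and verify the MAC with the returned key; if $c = c^*$, answer $\perp$. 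Because $\mathcal{B}$ knows $mak^*$, it can detect $\mathsf{Bad}$, so it can distinguish with respect to either event. No query to $\mathcal{O}_{\mathsf{Dec}}$ is ever made on $c^*$. The key point is that the $G_1$ modification is exactly what makes this simulation possible without knowing $tk'$.

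In $G_2$, $mak^*$ is used only to produce the single tag $\tau^*$ on $c^*$. So $\Pr[\mathsf{Bad}_2]$ is bounded by $q_d \cdot \mathsf{Adv}^{\mathsf{SUF\text{-}CMA}}_{\mathsf{MAC}}$: the reduction obtains $\tau^*$ from its tagging oracle and outputs a guessed query $(c^*, \tau)$ with $\tau \neq \tau^*$, which is a strong forgery. Alternatively, it uses its verification oracle for each such query, avoiding the $q_d$ loss. Finally, in $G_2$ the view of $\mathcal{A}$ is independent of $\beta$, so $\Pr[S_2] = 1/2$. Collecting terms gives
\[
\mathsf{Adv}^{\mathsf{sIND\text{-}CCA}}_{\mathsf{PKAKEM},\mathcal{A}} \le 4\,\mathsf{Adv}^{\mathsf{IND\text{-}CCA}}_{\mathsf{PKE},\mathcal{B}} + 2\,\mathsf{Adv}^{\mathsf{SUF\text{-}CMA}}_{\mathsf{MAC},\mathcal{F}},
\]
which is negligible.
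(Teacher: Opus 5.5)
Your proof is correct, and it differs from the paper's in the order in which the two hops are justified. The paper bounds the $G_0\to G_1$ gap (rejecting post-challenge queries with $c_i=c^\star$) directly by SUF-CMA. It uses a reduction that samples $mak^\star$ itself and encrypts $amsg_\beta\parallel mak^\star$ into $c^\star$. Only afterwards does it invoke IND-CCA of $\mathsf{PKE}$ to replace the challenge plaintext by a zero string, arriving at $2\,\mathsf{Adv}^{\mathsf{SUF\text{-}CMA}}_{\mathsf{MAC}}+2\,\mathsf{Adv}^{\mathsf{IND\text{-}CCA}}_{\mathsf{PKE}}$. That first step does not go through as written. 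A reduction embedding the SUF-CMA challenger's hidden key cannot place that key inside $c^\star$, and one that chooses $mak^\star$ itself yields no forgery against the challenger.

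Your deferred bad-event analysis is exactly the repair. Because the PKE reduction knows $mak^\star$, it can carry $\mathsf{Bad}$ across the IND-CCA hop. In $G_2$ the key $mak^\star$ reaches $\mathcal{A}$'s view only through $\tau^\star$. So obtaining $\tau^\star$ from the tagging oracle and testing candidates $(c^\star,\tau)$ with the verification oracle is a legitimate strong-forgery reduction. The cost is one extra IND-CCA term, hence your $4\,\mathsf{Adv}^{\mathsf{IND\text{-}CCA}}_{\mathsf{PKE}}+2\,\mathsf{Adv}^{\mathsf{SUF\text{-}CMA}}_{\mathsf{MAC}}$.

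Both arguments rest on the same structural fact: $act\mapsto\mathsf{Decode}(\mathsf{KEM}_{RR}.\mathsf{Decaps}(dk,act))$ is injective on admissible queries. The paper asserts this ``by the definition of the scheme'', but its RR-KEM definition only states correctness, so you are right to make it an explicit assumption. It also needs $\mathsf{Decode}$ itself to be injective (i.e.\ $\mathsf{Encode}$ a bijection onto $\mathcal{R}$). That does not follow from $\mathsf{Decode}(\mathsf{Encode}(c'))=c'$ alone, so your phrase ``$\mathsf{Decode}$ is the inverse of $\mathsf{Encode}$'' should be read as a two-sided inverse.
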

\begin{proof}
    We proceed via a sequence of games. Let $S_i$ denote the event that the adversary $\mathcal{A}$ successfully guesses the challenge bit (i.e., $\beta = \beta'$) in Game $G_i$. The advantage of $\mathcal{A}$ in Game $G_i$ is defined as $2 \cdot |\Pr[S_i] - 1/2|$.

    \textbf{Game $G_0$}: This is the original $\mathsf{Expt}^{\mathsf{sIND\text{-}CCA}}_{\mathsf{PKAKEM},\mathcal{A}}(\lambda)$ game. By definition, we have:
\[
\mathsf{Adv}^{\mathsf{sIND\text{-}CCA}}_{\mathsf{PKAKEM},\mathcal{A}}(\lambda) = 2 \cdot \left| \Pr[S_0] - 1/2 \right|.
\]

\textbf{Game $G_1$}: This game is identical to $G_0$, except we modify how the challenger answers decryption queries in the second phase (after the challenge ciphertext $act^\star$ is issued). When $\mathcal{A}$ submits a query $act_i \neq act^\star$, the challenger decodes it to $(c_i, \tau_i)$. If $c_i = c^\star$, the challenger immediately returns $\bot$ without performing further decryption.

\begin{lemma} \label{lem:game0_to_game1}
Let $S_0$ and $S_1$ be the events that the adversary $\mathcal{A}$ successfully guesses the challenge bit in Game $G_0$ and Game $G_1$, respectively. We have:
$$ |\Pr[S_0] - \Pr[S_1]| \leq \mathsf{Adv}^{\mathsf{SUF\text{-}CMA}}_{\mathsf{MAC}, \mathcal{B}}(\lambda). $$
\end{lemma}

\begin{proof}
Let $E$ be the event in Game $G_1$ where the adversary $\mathcal{A}$ submits a decryption query $act_i$ such that $act_i \neq act^\star$, but upon decoding, it yields $(c_i, \tau_i)$ with $c_i = c^\star$, and the MAC verification algorithm succeeds (i.e., $\mathsf{MAC.Verify}(mak^\star, c^\star, \tau_i) = 1$). 

Observe that Game $G_0$ and Game $G_1$ proceed perfectly identically unless event $E$ occurs. In Game $G_0$, such a query would be processed and potentially decrypted, whereas in Game $G_1$, it is immediately rejected with $\bot$. Conditional on $E$ not occurring ($\neg E$), the adversary's view in both games is identical, meaning the probability of $\mathcal{A}$ winning is identical:
\[
\Pr[S_0 \mid \neg E] = \Pr[S_1 \mid \neg E].
\]

Using the Law of Total Probability, we can expand $\Pr[S_0]$ and $\Pr[S_1]$:
\begin{align*}
\Pr[S_0] &= \Pr[S_0 \mid E]\Pr[E] + \Pr[S_0 \mid \neg E]\Pr[\neg E] \\
\Pr[S_1] &= \Pr[S_1 \mid E]\Pr[E] + \Pr[S_1 \mid \neg E]\Pr[\neg E]
\end{align*}

Subtracting the two equations, the terms conditioned on $\neg E$ perfectly cancel out:
\[
\Pr[S_0] - \Pr[S_1] = (\Pr[S_0 \mid E] - \Pr[S_1 \mid E]) \cdot \Pr[E]
\]

Taking the absolute value on both sides yields:
\[
|\Pr[S_0] - \Pr[S_1]| = |\Pr[S_0 \mid E] - \Pr[S_1 \mid E]| \cdot \Pr[E].
\]

Since probabilities are strictly bounded by the interval $[0, 1]$, the absolute difference $|\Pr[S_0 \mid E] - \Pr[S_1 \mid E]|$ can be at most $1$. Therefore, we derive the strict bound:
\[
|\Pr[S_0] - \Pr[S_1]| \leq \Pr[E].
\]

We now bound $\Pr[E]$ by constructing an algorithm $\mathcal{B}$ that uses $\mathcal{A}$ to win the SUF-CMA game against the $\mathsf{MAC}$ scheme. 

$\mathcal{B}$ simulates the CCA game for $\mathcal{A}$ exactly as in $G_0$. $\mathcal{B}$ generates the key pairs $(ek, dk)$ and $(dk', tk')$ by itself, allowing it to honestly answer all of $\mathcal{A}$'s decryption queries. During the challenge phase, $\mathcal{B}$ generates a random MAC key $mak^\star$, encrypts $amsg_\beta \parallel mak^\star$ to produce $c^\star$, computes $\tau^\star \gets \mathsf{MAC.Tag}(mak^\star, c^\star)$, and constructs the challenge ciphertext $act^\star$. 

If event $E$ occurs during the subsequent decryption query phase, $\mathcal{A}$ has submitted a ciphertext $act_i \neq act^\star$ that decodes to $(c^\star, \tau_i)$ such that 
$$\mathsf{MAC.Verify}(mak^\star, c^\star, \tau_i) = 1. $$
We analyze this submission:
\begin{itemize}
    \item By the definition of the scheme, the mappings provided by $\mathsf{Encode}$ and $\mathsf{KEM}_{RR}$ are injective. Since $act_i \neq act^\star$ but $c_i = c^\star$, it is mathematically guaranteed that $\tau_i \neq \tau^\star$.
    \item The only valid MAC tag for the message $c^\star$ that $\mathcal{A}$ has ever witnessed under the key $mak^\star$ is $\tau^\star$. 
\end{itemize}

Since $\tau_i \neq \tau^\star$ and $\mathsf{MAC.Verify}(mak^\star, c^\star, \tau_i) = 1$, the pair $(c^\star, \tau_i)$ constitutes a fresh, valid message-tag pair that was never outputted by the MAC generation algorithm. When this occurs, $\mathcal{B}$ immediately halts the simulation and outputs $(c^\star, \tau_i)$ as its forgery. 

Because $\mathcal{B}$'s simulation of $G_0$ is perfect up until the moment $E$ occurs, the probability that $\mathcal{A}$ triggers event $E$ is exactly the probability that $\mathcal{B}$ produces a valid SUF-CMA forgery. Thus:
\[
\Pr[E] \leq \mathsf{Adv}^{\mathsf{SUF\text{-}CMA}}_{\mathsf{MAC}, \mathcal{B}}(\lambda).
\]

We conclude the proof of the lemma:
\[
|\Pr[S_0] - \Pr[S_1]| \leq \mathsf{Adv}^{\mathsf{SUF\text{-}CMA}}_{\mathsf{MAC}, \mathcal{B}}(\lambda).
\]
\end{proof}

\textbf{Game $G_2$}: This game alters the challenge encryption phase. Instead of encrypting $amsg_\beta \parallel mak^\star$, the challenger constructs a dummy message $\theta = 0^{|amsg_\beta \parallel mak^\star|}$ (a string of zeros of equal length) and encrypts $\theta$ to generate $c^\star$. The rest of the encapsulation proceeds normally.

\begin{lemma} \label{lem:game1_to_game2}
Let $S_1$ and $S_2$ be the events that the adversary $\mathcal{A}$ successfully guesses the challenge bit in Game $G_1$ and Game $G_2$, respectively. We have:
\[
|\Pr[S_1] - \Pr[S_2]| \leq \mathsf{Adv}^{\mathsf{IND\text{-}CCA}}_{\mathsf{PKE},\mathcal{B}}(\lambda).
\]
\end{lemma}

\begin{proof}
Suppose there exists a polynomial-time adversary $\mathcal{A}$ that can distinguish Game $G_1$ from Game $G_2$. We construct a reduction algorithm $\mathcal{B}$ that uses $\mathcal{A}$ as a subroutine to break the standard IND-CCA security of the underlying $\mathsf{PKE}$ scheme.

\textbf{Setup Phase:} \\
The IND-CCA challenger for $\mathsf{PKE}$ generates a key pair $(pk,sk)$ and sends the public key $pk$ to $\mathcal{B}$. 
$\mathcal{B}$ embeds this public key into the anamorphic scheme by setting $dk' = pk$. Note that $\mathcal{B}$ does not know the corresponding secret key $tk'$ (which equals $sk$). $\mathcal{B}$ then natively generates the randomness-recoverable KEM key pair $(ek, dk) \gets \mathsf{KEM}_{RR}.\mathsf{KGen}(1^\lambda)$. $\mathcal{B}$ sends the public parameters $(ek, dk, dk')$ to $\mathcal{A}$.

\textbf{Pre-Challenge Decryption Queries:} \\
When $\mathcal{A}$ submits a decryption query for a ciphertext $act_i$, $\mathcal{B}$ must simulate the $\mathcal{O}_{\mathsf{aDec}}$ oracle without knowing $tk'$. $\mathcal{B}$ proceeds as follows:
\begin{enumerate}
    \item $\mathcal{B}$ uses its knowledge of $dk$ to decapsulate $act_i$: 
    $$(K_i, r_{e,i}) \gets \mathsf{KEM}_{RR}.\mathsf{Decaps}(dk, act_i).$$
    \item $\mathcal{B}$ decodes the randomness: $(c_i, \tau_i) \gets \mathsf{Decode}(r_{e,i})$.
    \item Since $\mathcal{B}$ lacks $tk'$, it forwards $c_i$ to its own $\mathsf{PKE}$ IND-CCA decryption oracle. The oracle returns the underlying plaintext $amsg_i \parallel mak_i$.
    \item $\mathcal{B}$ verifies the MAC: if $\mathsf{MAC.Verify}(mak_i, c_i, \tau_i) = 1$, it returns $amsg_i$ to $\mathcal{A}$; otherwise, it returns $\bot$.
\end{enumerate}
This simulation is perfectly indistinguishable from the real game.

\textbf{Challenge Phase:} \\
$\mathcal{A}$ outputs two equal-length messages $amsg_0$ and $amsg_1$. $\mathcal{B}$ generates a fresh MAC key $mak^\star \gets \mathsf{MAC.KGen}(1^\lambda)$ and picks a random bit $\beta \xleftarrow{\$} \{0,1\}$. 
$\mathcal{B}$ constructs two challenge plaintexts for its $\mathsf{PKE}$ challenger:
\begin{align*}
M_0 &= amsg_\beta \parallel mak^\star \\
M_1 &= \theta \quad \text{(where $\theta = 0^{|amsg_\beta \parallel mak^\star|}$)}
\end{align*}
$\mathcal{B}$ submits $(M_0, M_1)$ to its IND-CCA challenger. The challenger picks a random bit $b \xleftarrow{\$} \{0,1\}$, encrypts $M_b$ under $dk'$, and returns the challenge ciphertext $c^\star$. 
$\mathcal{B}$ computes $\tau^\star \gets \mathsf{MAC.Tag}(mak^\star, c^\star)$, sets $c^{\star\star} = (c^\star, \tau^\star)$, encodes $r_e^\star \gets \mathsf{Encode}(c^{\star\star})$, and encapsulates it: $(act^\star, K^\star) \gets \mathsf{KEM}_{RR}.\mathsf{Encaps}(ek; r_e^\star)$. $\mathcal{B}$ sends $act^\star$ to $\mathcal{A}$.

\textbf{Post-Challenge Decryption Queries:} \\
$\mathcal{A}$ may continue to make queries $act_i \neq act^\star$. $\mathcal{B}$ processes them exactly as in the pre-challenge phase, with one crucial exception strictly following the rule introduced in Game $G_1$: if decoding $act_i$ yields $c_i = c^\star$, $\mathcal{B}$ immediately returns $\bot$. 
This step is vital because $\mathcal{B}$ is prohibited from querying $c^\star$ to its own $\mathsf{PKE}$ decryption oracle. By leveraging the rule from $G_1$, $\mathcal{B}$ legally avoids illegal oracle queries while maintaining a perfect simulation.

\textbf{Guess:} \\
Eventually, $\mathcal{A}$ outputs a guess $\beta'$. $\mathcal{B}$ uses this to determine its own guess $b'$ for the PKE challenger:
\begin{itemize}
    \item If $\beta' = \beta$, $\mathcal{B}$ outputs $b' = 0$.
    \item If $\beta' \neq \beta$, $\mathcal{B}$ outputs $b' = 1$.
\end{itemize}

We now formally map the probabilities. By definition, the IND-CCA advantage of $\mathcal{B}$ against $\mathsf{PKE}$ is:
\[
\mathsf{Adv}^{\mathsf{IND\text{-}CCA}}_{\mathsf{PKE},\mathcal{B}}(\lambda) = \left| \Pr[b'=0 \mid b=0] - \Pr[b'=0 \mid b=1] \right|.
\]

When the PKE challenger's bit $b = 0$, the challenge ciphertext $c^\star$ is an encryption of $M_0 = amsg_\beta \parallel mak^\star$. In this scenario, $\mathcal{B}$ provides $\mathcal{A}$ with a view that is perfectly distributed identically to Game $G_1$. Thus, the probability that $\mathcal{A}$ guesses correctly ($\beta' = \beta$) is exactly the probability of event $S_1$:
\[
\Pr[b'=0 \mid b=0] = \Pr[\beta' = \beta \mid b=0] = \Pr[S_1].
\]

When the PKE challenger's bit $b = 1$, $c^\star$ is an encryption of $M_1 = \theta$. Here, the view of $\mathcal{A}$ is perfectly distributed identically to Game $G_2$. Thus, the probability that $\mathcal{A}$ guesses correctly is exactly the probability of event $S_2$:
\[
\Pr[b'=0 \mid b=1] = \Pr[\beta' = \beta \mid b=1] = \Pr[S_2].
\]

Substituting these into the advantage equation directly yields:
\[
\mathsf{Adv}^{\mathsf{IND\text{-}CCA}}_{\mathsf{PKE},\mathcal{B}}(\lambda) = \left| \Pr[S_1] - \Pr[S_2] \right|.
\]
\end{proof}

\textbf{Conclusion}: In Game $G_2$, the challenge ciphertext $act^\star$ encrypts the dummy string $\theta$ and is completely independent of the bit $\beta$. The adversary gains zero information about $\beta$ from the challenge. Thus, $\Pr[S_2] = 1/2$. 

Combining the bounds from the sequence of games via the triangle inequality, we obtain:
\begin{align*}
\mathsf{Adv}^{\mathsf{sIND\text{-}CCA}}_{\mathsf{PKAKEM},\mathcal{A}}(\lambda) &= 2 \cdot \left| \Pr[S_0] - 1/2 \right| \\
&\leq 2 \cdot \left| \Pr[S_0] - \Pr[S_1] \right| + 2 \cdot \left| \Pr[S_1] - \Pr[S_2] \right| + 2 \cdot \left| \Pr[S_2] - 1/2 \right| \\
&\leq 2 \cdot \mathsf{Adv}^{\mathsf{SUF\text{-}CMA}}_{\mathsf{MAC}, \mathcal{B}}(\lambda) + 2 \cdot \mathsf{Adv}^{\mathsf{IND\text{-}CCA}}_{\mathsf{PKE},\mathcal{B}}(\lambda).
\end{align*}
Since both terms on the right side are negligible, the advantage of any PPT adversary $\mathcal{A}$ is negligible.
\end{proof}

\subsection{Symmetric Key Anamorphic KEM}
\begin{definition}[Symmetric Key Anamorphic KEM]
    A Symmetric Key Anamorphic KEM (SKAKEM) with covert message space $\mathcal{M'}$ is defined by a triple of polynomial-time algorithms:
    \begin{itemize}
    \item{$(ek,dk,DK) \leftarrow \mathsf{SKAKEM.aGen}(1^\lambda)$}: The probabilistic anamorphic key generation algorithm inputs a security parameter $1^\lambda$ and outputs a standard encapsulation key $ek$, a standard decapsulation key $dk$ along with an anamorphic double key $DK$.
    \item{$(K,act) \leftarrow \mathsf{SKAKEM.aEnc}(ek,DK,amsg,ctr)$}: The probabilistic anamorphic encryption algorithm takes as input the encapsulation key $ek$, the double key $DK$, a covert message $amsg \in \mathcal{M'}$, and a counter $ctr$. It outputs a session key $K$ and an anamorphic ciphertext $act$.
    \item{$amsg/\bot \leftarrow \mathsf{SKAKEM.aDec}(DK, dk, ctr, act)$}: The deterministic anamorphic decryption algorithm inputs a double key $DK$, a decapsulation key $dk$, a matching counter $ctr$ and an anamorphic ciphertext $act$. It outputs the covert message $amsg$ or an error symbol $\bot$.
\end{itemize}
\end{definition}

\textbf{Correctness.} SKAKEM satisfies correctness if for any anamorphic message $amsg\in\mathcal{M'}$, it holds that:
\[
\mathrm{Pr}
\left[
\mathrm{amsg} \neq \mathsf{SKAKEM.aDec}(DK, dk, ctr, act) :
\begin{array}{l}
    (ek,dk,DK) \leftarrow \mathsf{SKAKEM.aGen}(1^\lambda); \\
    (K,act) \leftarrow \mathsf{SKAKEM.aEnc}(ek,DK, amsg, ctr)
\end{array}
\right]
\]
is negligible.

\textbf{Anamorphic security.} SKAKEM achieves anamorphic security (i.e., indistinguishability between normal and anamorphic ciphertexts) if, for any PPT adversary $\mathcal{A}$, the following advantage is negligible in $\lambda$:
\[
\mathsf{Adv}^{\mathsf{Ana\text{-}Security}}_{\mathsf{SKAKEM}, \mathcal{A}}(\lambda) = \left| \Pr[\mathsf{RealG}_{\mathsf{SKAKEM}}(\lambda, \mathcal{A}) = 1] - \Pr[\mathsf{AnamorphicG}_{\mathsf{SKAKEM}}(\lambda, \mathcal{A}) = 1] \right| \leq \mathsf{negl}(\lambda),
\]
where the security games $\mathsf{RealG}_{\mathsf{SKAKEM}}$ and $\mathsf{AnamorphicG}_{\mathsf{SKAKEM}}$ are defined as follows:

\vspace{0.4cm}
\noindent
{\renewcommand{\arraystretch}{1.6}
\begin{tabular}{|l|}
    \hline
    $\mathsf{RealG}_{\mathsf{SKAKEM}}(\lambda, \mathcal{A})$ \\
    \hline
    $(ek, dk) \xleftarrow{\$} \mathsf{KEM}_{RR}.\mathsf{KGen}(1^\lambda)$ \\
    \textbf{return } $\mathcal{A}^{\mathcal{O}_e(ek, \cdot, \cdot)}(ek, dk)$ \\
    \quad $\text{where } \mathcal{O}_e(ek, amsg, ctr) \text{ computes } r_e \xleftarrow{\$} \mathcal{R} \text{ and returns } \mathsf{KEM}_{RR}.\mathsf{Encaps}(ek; r_e)$ \\
    \hline
\end{tabular}
}

\vspace{0.6cm}
\noindent
{\renewcommand{\arraystretch}{1.6}
\begin{tabular}{|l|}
    \hline
    $\mathsf{AnamorphicG}_{\mathsf{SKAKEM}}(\lambda, \mathcal{A})$ \\
    \hline
    $(ek, dk, DK) \xleftarrow{\$} \mathsf{SKAKEM.aGen}(1^\lambda)$ \\
    \textbf{return } $\mathcal{A}^{\mathcal{O}_a(ek, DK, \cdot, \cdot)}(ek, dk)$ \\
    \quad $\text{where } \mathcal{O}_a(ek, DK, amsg, ctr) \text{ returns } \mathsf{SKAKEM.aEnc}(ek, DK, amsg, ctr)$ \\
    \hline
\end{tabular}
}
\vspace{0.2cm}

\textbf{sIND-CCA security.}
$\mathsf{SKAKEM}$ achieves sIND-CCA security if, for any PPT adversary $\mathcal{A}$, there exists a negligible function $\mathsf{negl}(\lambda)$ such that the advantage of $\mathcal{A}$ satisfies:
\[
\mathsf{Adv}^{\mathsf{sIND\text{-}CCA}}_{\mathsf{SKAKEM}, \mathcal{A}}(\lambda) = 2 \cdot \left| \Pr[\mathsf{Expt}^{\mathsf{sIND\text{-}CCA}}_{\mathsf{SKAKEM}, \mathcal{A}}(\lambda) = 1] - 1/2 \right| \leq \mathsf{negl}(\lambda),
\]
where the security experiment $\mathsf{Expt}^{\mathsf{sIND\text{-}CCA}}_{\mathsf{SKAKEM}, \mathcal{A}}(\lambda)$ is formally defined as follows:

\vspace{0.4cm}
\noindent
\begin{tabular}{@{}l@{}}
$\mathsf{Expt}^{\mathsf{sIND\text{-}CCA}}_{\mathsf{SKAKEM}, \mathcal{A}}(\lambda)$ \\
\hline
\rule{0pt}{3ex}1: $(ek, dk, DK) \xleftarrow{\$} \mathsf{SKAKEM.aGen}(1^\lambda)$ \\
2: $(amsg_0, amsg_1, ctr^*) \leftarrow \mathcal{A}^{\mathsf{SKAKEM.aEnc}(ek, DK, \cdot, \cdot), \mathsf{SKAKEM.aDec}(DK, dk, \cdot, \cdot)}(ek, dk)$ \quad $\triangleright$ $\mathcal{A}$ knows $dk$ \\
3: $\beta \xleftarrow{\$} \{0,1\}$ \\
4: $(K^*, act^*) \leftarrow \mathsf{SKAKEM.aEnc}(ek, DK, amsg_\beta, ctr^*)$ \\
5: $\beta' \leftarrow \mathcal{A}^{\mathsf{SKAKEM.aEnc}(ek, DK, \cdot, \cdot), \mathsf{SKAKEM.aDec}(DK, dk, \cdot, \cdot)}(act^*, ek, dk)$ \\
6: \textbf{return} 1 if $\beta = \beta'$ and the following conditions hold: \\
\quad \quad (i) $ctr^*$ was never queried to $\mathsf{SKAKEM.aEnc}$ \\
\quad \quad (ii) $(ctr^*, act^*)$ was never queried to $\mathsf{SKAKEM.aDec}$ \\
\quad otherwise return 0.
\end{tabular}
\vspace{0.2cm}

\subsubsection{Specific Construction.}   
Let $F$ denote a secure pseudorandom function, and $F_{Inv}$ be a secure IPF. Let $ctr$ be a counter shared between the parties engaging in covert communication. Let MAC be a scheme that is both pseudorandom and SUF-CMA secure. Let $\text{KEM}_{RR}$ be a randomness-recoverable KEM.  
\vspace{0.3cm}
\begin{center}
\begin{minipage}{0.9\linewidth}
$\mathsf{SKAKEM.aGen}(1^\lambda)$
\hrule
\vspace{0.5em}
\begin{algorithmic}[1]
     \State $k \xleftarrow{\$} \mathcal{K}$
     \State $(ek, dk) \leftarrow \mathsf{KEM}_{RR}.\mathsf{KGen}(1^\lambda)$
     \State $mak \leftarrow \mathsf{MAC.KGen}(1^\lambda)$
     \State $DK := (k, mak)$
     \State \textbf{return} $(ek, dk, DK)$
\end{algorithmic}
\end{minipage}
\end{center}

\vspace{0.3cm}
\begin{center}
\begin{minipage}{0.9\linewidth}
$\mathsf{SKAKEM.aEnc}(ek, DK, amsg, ctr)$
\hrule
\vspace{0.5em}
\begin{algorithmic}[1]
     \State Parse $DK$ as $(k, mak)$
     \State $c := amsg \oplus F(k, ctr)$
     \State $\tau \leftarrow \mathsf{MAC.Tag}(mak, c)$
     \State $c' := (c, \tau)$
     \State $r_e \leftarrow F_{Inv}(k, c')$
     \State $(K, act) \leftarrow \mathsf{KEM}_{RR}.\mathsf{Encaps}(ek; r_e)$
     \State \textbf{return} $(K, act)$
\end{algorithmic}
\end{minipage}
\end{center}

\vspace{0.3cm}
\begin{center}
\begin{minipage}{0.9\linewidth}
$\mathsf{SKAKEM.aDec}(DK, dk, ctr, act)$
\hrule
\vspace{0.5em}
\begin{algorithmic}[1]
     \State Parse $DK$ as $(k, mak)$
     \State $(K, r_e) \leftarrow \mathsf{KEM}_{RR}.\mathsf{Decaps}(dk, act)$
     \State $c' := F^{-1}_{Inv}(k, r_e)$
     \State Parse $c'$ as $(c, \tau)$
     \State \textbf{if} $\mathsf{MAC.Verify}(mak, c, \tau) = 0$ \textbf{then}
     \State \hspace{\algorithmicindent} \textbf{return} $\bot$
     \State \textbf{return} $c \oplus F(k, ctr)$
\end{algorithmic}
\end{minipage}
\end{center}

\subsection{Security Analysis of SKAKEM}
\begin{theorem}
Assume $F$ is a secure pseudorandom function and $F_{Inv}$ is a secure inverse pseudorandom function. Assuming $F$ and $F_{Inv}$ employ standard domain separation, the $\mathsf{SKAKEM}$ scheme achieves anamorphic security. 
\end{theorem}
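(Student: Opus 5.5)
We argue by a short hybrid sequence that takes $\mathsf{AnamorphicG}_{\mathsf{SKAKEM}}$ to $\mathsf{RealG}_{\mathsf{SKAKEM}}$, mirroring the anamorphic-security proof for PKAKEM above. The dictator holds $(ek,dk)$, so every encapsulation it sees can be opened to the coins $r_e$. Hence the only object we must make look uniform is the sequence of values $r_e$ returned by the oracle. The key $k$ is used in two places, inside $F(k,\cdot)$ and inside $F_{Inv}(k,\cdot)$. The stated domain-separation assumption lets us treat these as two independent keyed primitives: we read it as $F$ and $F_{Inv}$ being keyed by independent sub-keys derived from $k$, or as their security holding jointly. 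We therefore replace them one at a time.

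\textbf{Game $G_0$} is $\mathsf{AnamorphicG}_{\mathsf{SKAKEM}}$.

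\textbf{Game $G_1$} replaces $F_{Inv}(k,\cdot)$ by a truly random injective function $R\xleftarrow{\$}\mathsf{InjFuns}[\mathcal{X}_I,\mathcal{Y}_I]$. A reduction $\mathcal{B}_1$ to IPF security samples $(ek,dk)$ and $mak$ itself. It computes $c=amsg\oplus F(\cdot,ctr)$ using its own key for $F$ (this is where domain separation is used) and computes $\tau$. It then obtains $r_e$ by querying its oracle on $(c,\tau)$, and returns $\mathsf{KEM}_{RR}.\mathsf{Encaps}(ek;r_e)$. This gives $|\Pr[S_0]-\Pr[S_1]|\le\mathsf{Adv}^{\mathsf{IPF}}_{F_{Inv},\mathcal{B}_1}(\lambda)$.

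\textbf{Game $G_2$} replaces $R$ by a truly random function $f$. The two games differ only if two distinct inputs $(c,\tau)$ queried by the oracle collide under $f$. With $q$ queries, this happens with probability at most $q^2/(2|\mathcal{Y}_I|)$ by the standard switching lemma. The bound is negligible because $\mathcal{Y}_I=\mathcal{R}$ is exponentially large; we will state this as a requirement on the randomness space.

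\textbf{From $G_2$ to the real game.} In $G_2$ each oracle answer uses $r_e=f(c,\tau)$. Whenever the inputs $(c,\tau)$ are pairwise distinct, these values are independent and uniform on $\mathcal{R}$, which is exactly $\mathsf{RealG}_{\mathsf{SKAKEM}}$.

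The main obstacle is that the adversary chooses $(amsg,ctr)$. It can repeat the same query, which produces the same $(c,\tau)$ and, if the tag is deterministic, the same $r_e$. In the real game a repeated query gives fresh coins, so this is a trivial distinguisher. Two fixes are available.
\begin{itemize}
\item Under the intended usage that counters are never reused, we insert \textbf{Game $G_{1.5}$} before $G_1$, replacing $F(k,\cdot)$ by a random function via PRF security. Distinct counters then give independent uniform pads, so the values $c$ are distinct except with probability $q^2/2^{|c|}$.
\item Otherwise, we invoke the fact that $\mathsf{MAC.Tag}$ is probabilistic, so $\tau$ (and hence $(c,\tau)$) repeats only with negligible probability; this is where MAC pseudorandomness can be invoked to argue that tags carry enough entropy.
\end{itemize}
We add the corresponding collision term to the bound and restrict the adversary to nonce-respecting counters if the first fix is needed.

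Summing the terms with the triangle inequality gives
\[
\mathsf{Adv}^{\mathsf{Ana\text{-}Security}}_{\mathsf{SKAKEM},\mathcal{A}}(\lambda)\le \mathsf{Adv}^{\mathsf{PRF}}_{F}(\lambda)+\mathsf{Adv}^{\mathsf{IPF}}_{F_{Inv}}(\lambda)+\frac{q^2}{|\mathcal{R}|}+\frac{q^2}{2^{|c|}},
\]
which is negligible.
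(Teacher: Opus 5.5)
Your main line (replace $F$ by a random function using the domain-separated key, replace $F_{Inv}$ via IPF security, and use nonce-respecting counters so the inputs $(c,\tau)$ are distinct except with probability about $q^2/2^{|c|}$) is essentially the paper's proof. Your explicit switching step from a random injection to a random function (the $q^2/|\mathcal{R}|$ term) also repairs a point the paper glosses over, since the paper treats the IPF's ideal object as a random function.

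Drop the second ``fix'', though. The paper's MAC pseudorandomness is a single-query notion satisfied by deterministic MACs, so it does not guarantee that tags on repeated messages differ, and the MAC is not among the theorem's hypotheses. The nonce-respecting restriction, which the paper also imposes, is therefore genuinely required rather than optional.
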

\begin{proof}
We proceed via a sequence of computationally indistinguishable games, $G_0$ and $G_1$. Let $S_i$ denote the event that a PPT adversary $\mathcal{A}$ outputs $1$ in game $G_i$.

\textbf{Game $G_0$:} This corresponds exactly to the $\mathsf{AnamorphicG}_{\mathsf{SKAKEM}}(\lambda, \mathcal{A})$ game. 
\[
\Pr[\mathsf{AnamorphicG}_{\mathsf{SKAKEM}}(\lambda, \mathcal{A}) = 1] = \Pr[S_0].
\]

\textbf{Game $G_1$:} In this game, we replace the evaluation of the IPF $F_{Inv}(k, \cdot)$ inside the encryption oracle with a truly random function $R(\cdot) : \{0,1\}^* \to \mathcal{R}$. 
Specifically, upon receiving a query $(amsg, ctr)$, the modified oracle computes $c := amsg \oplus F(k, ctr)$ and $\tau \leftarrow \mathsf{MAC.Tag}(mak, c)$. It sets $c' := (c, \tau)$, samples the randomness as $r_e \leftarrow R(c')$, and returns $\mathsf{KEM}_{RR}.\mathsf{Encaps}(ek; r_e)$.

\begin{lemma}
If $F_{Inv}$ is a secure PRF, there exists a PPT algorithm $\mathcal{B}$ such that:
\[
| \Pr[S_0] - \Pr[S_1] | = \mathsf{Adv}^{\mathsf{IPF}}_{F_{Inv}, \mathcal{B}}(\lambda).
\]
\end{lemma}
\begin{proof}
    We construct a reduction $\mathcal{B}$ interacting with a PRF challenger $\mathcal{C}_{IPF}$. Since $F$ and $F_{Inv}$ employ domain separation, they behave as independent PRFs. Thus, $\mathcal{B}$ independently samples the masking key $k_{enc} \xleftarrow{\$} \mathcal{K}$ and $mak \leftarrow \mathsf{MAC.KGen}(1^\lambda)$, while the challenger $\mathcal{C}_{IPF}$ holds the hidden key for $F_{Inv}$.
    
    When $\mathcal{A}$ submits $(amsg, ctr)$, $\mathcal{B}$ computes $c := amsg \oplus F(k_{enc}, ctr)$ and $\tau \leftarrow \mathsf{MAC.Tag}(mak, c)$. $\mathcal{B}$ sets $c' := (c, \tau)$ and submits $c'$ to its challenger $\mathcal{C}_{IPF}$, receiving a challenge response $r_e^*$. $\mathcal{B}$ then returns $\mathsf{KEM}_{RR}.\mathsf{Encaps}(ek; r_e^*)$ to $\mathcal{A}$.
    
    If $\mathcal{C}_{IPF}$ instantiates the real $F_{Inv}$, $\mathcal{A}$'s view is identical to $G_0$. If $\mathcal{C}_{IPF}$ uses a truly random function $R$, $\mathcal{A}$'s view perfectly matches $G_1$. Thus, $\mathcal{B}$'s advantage bounds the difference perfectly.
\end{proof}

\textbf{Analysis of Game $G_1$ to $\mathsf{RealG}$:}
In $G_1$, the randomness $r_e$ is generated by a true random function $R(c')$. If the sequence of queried inputs $c'$ contains no duplicates, then $R(c')$ yields perfectly independent and uniformly distributed values in $\mathcal{R}$, rendering $G_1$ identical to the $\mathsf{RealG}_{\mathsf{SKAKEM}}$ game (where $r_e \xleftarrow{\$} \mathcal{R}$ is explicitly sampled). 

We bound the probability of a collision in $c'$. A collision $c'_i = c'_j$ (for $i \neq j$) implies $c_i = c_j$. Recall that $c_i = amsg_i \oplus F(k, ctr_i)$. Since $\mathcal{A}$ is nonce-respecting, $ctr_i \neq ctr_j$. Due to the PRF security of $F(k, \cdot)$, the mask $F(k, ctr_i)$ is computationally unpredictable. Thus, $\mathcal{A}$ cannot adversarially choose $amsg_j$ to force $c_j = c_i$. The probability of an accidental collision is bounded by the PRF advantage plus the statistical collision bound $q_e^2 / 2^{|c|}$, where $q_e$ is the maximum number of queries. 

Therefore, $G_1$ is statistically close to the real game:
\[
| \Pr[S_1] - \Pr[\mathsf{RealG}_{\mathsf{SKAKEM}}(\lambda, \mathcal{A}) = 1] | \leq \mathsf{Adv}^{\mathsf{PRF}}_{F, \mathcal{A}}(\lambda) + \frac{q_e^2}{2^{|c|}}.
\]

By summing the probability bounds across the games via the triangle inequality, we have:
\[
\mathsf{Adv}^{\mathsf{Ana\text{-}Security}}_{\mathsf{SKAKEM}, \mathcal{A}}(\lambda) \leq \mathsf{Adv}^{\mathsf{IPF}}_{F_{Inv}, \mathcal{B}}(\lambda) + \mathsf{Adv}^{\mathsf{PRF}}_{F, \mathcal{A}}(\lambda) + \frac{q_e^2}{2^{|c|}}.
\]
Since $F_{Inv}$ and $F$ are secure PRFs, their advantages are negligible. For a sufficiently large ciphertext length $|c|$, $q_e^2 / 2^{|c|}$ is also negligible. Consequently, the total advantage is negligible, completing the proof.
\end{proof}

\begin{theorem}
Consider $F$ as a secure PRF and assume that the MAC scheme is SUF-CMA secure. Additionally, let  $F_{Inv}$ be a secure IPF and $\mathsf{KEM}_{RR}$ be a randomness-recoverable KEM. Then the $\mathsf{SKAKEM}$ scheme achieves sIND-CCA security.
\end{theorem}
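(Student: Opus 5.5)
We prove the theorem with a sequence of games $G_0,\dots,G_4$, following the template of the PKAKEM sIND-CCA proof. Let $S_i$ be the event $\beta=\beta'$ in $G_i$, and let $G_0$ be $\mathsf{Expt}^{\mathsf{sIND\text{-}CCA}}_{\mathsf{SKAKEM},\mathcal{A}}(\lambda)$. As in the anamorphic-security proof, we use domain separation to treat the keys of $F$ and $F_{Inv}$ as independent.

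\textbf{Replacing the IPF.} In $G_1$ we replace $F_{Inv}(k,\cdot)$ and $F_{Inv}^{-1}(k,\cdot)$, used in both oracles and in the challenge, by a truly random injection $R$ and its inverse $R^{-1}$. The reduction holds $dk$, $mak$ and the masking key itself. It answers $\mathsf{aEnc}$ queries through the forward oracle. It answers $\mathsf{aDec}$ queries by decapsulating with $dk$ and then calling the inverse oracle. So $|\Pr[S_0]-\Pr[S_1]|\le \mathsf{Adv}^{\mathsf{IPF}}_{F_{Inv},\mathcal{B}}(\lambda)$.

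\textbf{Replacing the PRF.} In $G_2$ we replace $F(k,\cdot)$ by a truly random function $f$, losing $\mathsf{Adv}^{\mathsf{PRF}}_{F,\mathcal{B}'}(\lambda)$. The reduction computes every mask by querying its oracle on the relevant counter. Condition (i) guarantees that $ctr^*$ is never queried to $\mathsf{aEnc}$. Hence $f(ctr^*)$ is used only in the challenge and in decryption queries that carry counter $ctr^*$.

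\textbf{Rejecting forgeries (main obstacle).} In $G_3$ we have the decryption oracle return $\bot$ on any query $(ctr,act)$ in which $act$ recovers, via $\mathsf{KEM}_{RR}.\mathsf{Decaps}$ and $R^{-1}$, a pair $c'=(c,\tau)$ that passes $\mathsf{MAC.Verify}$ but was never produced by an $\mathsf{aEnc}$ call or by the challenge. The only exception is queries that decrypt $act^*$ itself, which we handle below. The two games differ only on a fresh valid tag under $mak$. This is an SUF-CMA forgery, since the reduction obtains every tag it needs from its $\mathsf{MAC.Tag}$ oracle. The loss is therefore at most $\mathsf{Adv}^{\mathsf{SUF\text{-}CMA}}_{\mathsf{MAC},\mathcal{B}''}(\lambda)$.

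The delicate point is the ``many-to-one'' problem raised in the technique overview. Consider a query $(ctr^*,act)$ with $act\neq act^*$ whose recovered pair is the challenge pair $c'^*$. Because $R$ is injective, such a query forces $r_e=r_e^*$. By correctness of $\mathsf{KEM}_{RR}$ and determinism of $\mathsf{Encaps}(ek;\cdot)$, it then forces $act=act^*$. This is exactly the randomness-recovery injectivity the abstract emphasises, and the step I expect to need the most care: if $\mathsf{Decaps}$ could map several ciphertexts to the same $r_e$, the argument would break. I would state this injectivity explicitly as the property of $\mathsf{KEM}_{RR}$ being invoked.

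Next consider queries $(ctr,act^*)$ with $ctr\ne ctr^*$. These return $c^*\oplus f(ctr)$, which is independent of $\beta$ because $f(ctr)$ is an independent value. Condition (ii) excludes the pair $(ctr^*,act^*)$. Every remaining valid query recovers some $c'$ that was output by $\mathsf{aEnc}$, so its answer is determined by $f$ at counters other than $ctr^*$, or by values the adversary already chose.

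\textbf{Conclusion.} In $G_3$, $f(ctr^*)$ appears only in $c^*=amsg_\beta\oplus f(ctr^*)$. This value is a one-time pad, so $c^*$ is uniform and independent of $\beta$. The challenge ciphertext $act^*$ is a function of $R(c^*,\tau^*)$ and is therefore independent of $\beta$ as well, giving $\Pr[S_3]=1/2$. (An optional $G_4$ that samples $c^*$ uniformly makes this explicit.) Summing the losses by the triangle inequality yields
\[
\mathsf{Adv}^{\mathsf{sIND\text{-}CCA}}_{\mathsf{SKAKEM},\mathcal{A}}(\lambda)\le 2\bigl(\mathsf{Adv}^{\mathsf{IPF}}_{F_{Inv},\mathcal{B}}(\lambda)+\mathsf{Adv}^{\mathsf{PRF}}_{F,\mathcal{B}'}(\lambda)+\mathsf{Adv}^{\mathsf{SUF\text{-}CMA}}_{\mathsf{MAC},\mathcal{B}''}(\lambda)\bigr),
\]
which is negligible.
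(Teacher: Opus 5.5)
Your argument fails at the final one-time-pad step. For the experiment as written it cannot be repaired, because the statement is false. (The experiment as written allows arbitrary counters in $\mathsf{aDec}$ queries, which your own treatment of $(ctr,act^*)$ assumes.) $\mathsf{aDec}$ strips the mask $F(k,ctr)$ of the \emph{queried} counter, whatever counter the recovered $c$ was produced under. Condition (ii) excludes only the single pair $(ctr^*,act^*)$.

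So in $G_3$ the value $f(ctr^*)$ does not appear only in $c^*$. You noted that decryption queries carrying $ctr^*$ use it, then dropped them. A query $(ctr^*,act_1)$, with $act_1$ output by $\mathsf{aEnc}(amsg,ctr_1)$, carries an honest tag, is no forgery, and is answered $amsg\oplus f(ctr_1)\oplus f(ctr^*)$. Your claim that the answer $c^*\oplus f(ctr)$ to $(ctr,act^*)$ is independent of $\beta$ then fails, since the adversary can learn $f(ctr)\oplus f(ctr^*)$ this way.

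Concretely, take $ctr_1\neq ctr^*$:
\begin{itemize}
\item query $\mathsf{aEnc}(0,ctr_1)$ with the all-zero message to get $act_1$;
\item query $\mathsf{aDec}(ctr^*,act_1)$ to get $y_1=F(k,ctr_1)\oplus F(k,ctr^*)$;
\item after the challenge, query $\mathsf{aDec}(ctr_1,act^*)$ to get $y_2=amsg_\beta\oplus F(k,ctr^*)\oplus F(k,ctr_1)$.
\end{itemize}
Then $y_1\oplus y_2=amsg_\beta$. This adversary is nonce-respecting and satisfies (i) and (ii), since $act_1\neq act^*$ except with negligible probability. Its advantage is essentially $1$, so no game sequence can reach $\Pr[S_3]=1/2$.

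The paper's route does not avoid this. Its $G_1$ answers every $act\in L_{enc}$ with the stored $amsg$ regardless of the queried counter, which is not what $\mathsf{aDec}$ computes when the counters differ. It also never specifies the answer to $(ctr,act^*)$, so its SUF-CMA hop hides the same hole.

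A fix is to bind the counter, e.g.\ $\tau\leftarrow\mathsf{MAC.Tag}(mak,ctr\parallel c)$ with verification on $ctr\parallel c$. Then your forgery game rejects every valid query whose counter differs from the one its recovered pair was tagged under. The only surviving answers are $\bot$ or messages the adversary itself encrypted, and your conclusion goes through, provided any query recovering $(c^*,\tau^*)$ must be $act^*$.

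That last step does not follow, as you claim, from correctness of $\mathsf{KEM}_{RR}$ and determinism of $\mathsf{Encaps}$. These make $\mathsf{Encaps}(ek;\cdot)$ injective, but they do not prevent a malformed $act\neq act^*$ from decapsulating to $r_e^*$. You need an explicit assumption, e.g.\ a re-encryption check in $\mathsf{Decaps}$. The paper relies on the same unproved ``injective mapping from $act$ to $(c,\tau)$''.
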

\begin{proof}
   We proceed via a sequence of computationally indistinguishable games, $G_0, G_1$, and $G_2$. Let $S_i$ denote the event that a nonce-respecting PPT adversary $\mathcal{A}$ outputs $\beta' = \beta$ in game $G_i$.

\textbf{Game $G_0$:} This is the original $\mathsf{Expt}^{\mathsf{sIND\text{-}CCA}}_{\mathsf{SKAKEM}, \mathcal{A}}(\lambda)$ game as defined in the security model. 
\[
\mathsf{Adv}^{\mathsf{sIND\text{-}CCA}}_{\mathsf{SKAKEM}, \mathcal{A}}(\lambda) = 2 \cdot \left| \Pr[S_0] - 1/2 \right|.
\]

\textbf{Game $G_1$:} We modify the behavior of the decryption oracle. The challenger maintains a list $L_{enc}$ recording all ciphertexts generated by the encryption oracle $\mathsf{SKAKEM.aEnc}$. When $\mathcal{A}$ submits a query $(ctr, act)$ to the $\mathsf{SKAKEM.aDec}$ oracle, the challenger first checks if $act \in L_{enc}$. If $act \notin L_{enc}$, the oracle immediately returns $\bot$ without performing any decryption operations. If $act \in L_{enc}$, it returns the corresponding $amsg$ from its internal records.

\begin{lemma}
$| \Pr[S_0] - \Pr[S_1] | \leq \mathsf{Adv}^{\mathsf{SUF\text{-}CMA}}_{\mathsf{MAC}, \mathcal{B}_1}(\lambda)$.
\end{lemma}
\begin{proof}
    Let $Forge$ be the event that $\mathcal{A}$ queries $\mathsf{SKAKEM.aDec}$ with a valid ciphertext $act \notin L_{enc}$ that successfully passes the internal $\mathsf{MAC.Verify}$ check in $G_0$. Games $G_0$ and $G_1$ proceed identically until $Forge$ occurs. Thus, $| \Pr[S_0] - \Pr[S_1] | \leq \Pr[Forge]$. 
    
    We construct a reduction $\mathcal{B}_1$ to bound $\Pr[Forge]$. $\mathcal{B}_1$ interacts with a $\mathsf{SUF\text{-}CMA}$ challenger, gaining access to $\mathsf{MAC.Tag}$ and $\mathsf{MAC.Verify}$ oracles under a hidden key $mak^*$. $\mathcal{B}_1$ independently generates $k \xleftarrow{\$} \mathcal{K}$ and $(ek, dk) \leftarrow \mathsf{KEM}_{RR}.\mathsf{KGen}(1^\lambda)$, implicitly setting $DK = (k, mak^*, ek)$, and invokes $\mathcal{A}$.
    
    For queries $(amsg, ctr)$ to $\mathsf{SKAKEM.aEnc}$, $\mathcal{B}_1$ computes $c := amsg \oplus F(k, ctr)$, queries its $\mathsf{MAC.Tag}$ oracle to get $\tau$, computes $r_e \leftarrow F_{Inv}(k, (c, \tau))$, and encapsulates it to $act$, which is added to $L_{enc}$. 
    
    For queries $(ctr, act)$ to $\mathsf{SKAKEM.aDec}$, $\mathcal{B}_1$ decapsulates $act$ to retrieve $r_e$ and inverses it via $F_{Inv}^{-1}$ to get $(c, \tau)$. It then queries the $\mathsf{MAC.Verify}$ oracle with $(c, \tau)$. Since $\mathsf{KEM}_{RR}$ and $F_{Inv}$ mathematically establish an injective mapping from $act$ to $(c, \tau)$, any queried $act \notin L_{enc}$ implies a novel pair $(c, \tau)$ that was never generated by the $\mathsf{MAC.Tag}$ oracle. If $\mathsf{MAC.Verify}$ returns $1$, $\mathcal{B}_1$ has successfully forged a strong MAC and outputs $(c, \tau)$ as its winning forgery. Therefore, $\Pr[Forge]$ is exactly bounded by $\mathcal{B}_1$'s $\mathsf{SUF\text{-}CMA}$ advantage.
\end{proof}

\textbf{Game $G_2$:} In this game, we modify the computation of the challenge ciphertext. We replace the evaluation of the PRF $F(k, ctr^*)$ with a uniformly random string $R^* \xleftarrow{\$} \{0,1\}^{|amsg|}$. The challenge ciphertext is constructed using $c^* := amsg_\beta \oplus R^*$.

\begin{lemma}
$| \Pr[S_1] - \Pr[S_2] | \leq \mathsf{Adv}^{\mathsf{PRF}}_{F, \mathcal{B}_2}(\lambda)$.
\end{lemma}
\begin{proof}
    We construct $\mathcal{B}_2$ interacting with a PRF challenger. $\mathcal{B}_2$ holds a freshly generated $mak$ and $(ek, dk)$, but queries the PRF challenger for evaluations of $F(k, \cdot)$. 
    
    During $\mathsf{SKAKEM.aEnc}$ queries, $\mathcal{B}_2$ queries the PRF challenger with $ctr$ to obtain the mask, and honestly simulates the rest of the encryption. 
    Crucially, because we are in $G_1$, the $\mathsf{SKAKEM.aDec}$ oracle automatically rejects any $act \notin L_{enc}$. For $act \in L_{enc}$, $\mathcal{B}_2$ simply looks up the associated $amsg$ from its encryption history. Thus, $\mathcal{B}_2$ \textit{never needs to evaluate $F(k, ctr)$ to answer decryption queries}, completely avoiding any circular dependency with its challenger.
    
    For the challenge query $(amsg_0, amsg_1, ctr^*)$, $\mathcal{B}_2$ queries the PRF challenger with $ctr^*$ to get a string $Y^*$, and sets $c^* := amsg_\beta \oplus Y^*$. It computes the tag $\tau^*$ and encapsulates the challenge ciphertext $(K^*, act^*)$ normally.
    If the PRF challenger uses the real function $F$, this perfectly simulates $G_1$. If the challenger returns a truly random string, it perfectly simulates $G_2$. Therefore, the distinguishing advantage is bounded by $\mathsf{Adv}^{\mathsf{PRF}}_{F, \mathcal{B}_2}(\lambda)$.
\end{proof}

\textbf{Conclusion:} 
In $G_2$, the challenge ciphertext incorporates $c^* = amsg_\beta \oplus R^*$. Since $\mathcal{A}$ is required to be nonce-respecting, the specific challenge counter $ctr^*$ is never queried to the $\mathsf{SKAKEM.aEnc}$ oracle, meaning $R^*$ is a completely fresh, uniformly random string. Consequently, $R^*$ acts as a perfect one-time pad, information-theoretically hiding the bit $\beta$. The adversary's probability of guessing $\beta$ is exactly $1/2$, meaning $\Pr[S_2] = 1/2$.

Summing the bounds across the sequence of games yields:
\[
\mathsf{Adv}^{\mathsf{sIND\text{-}CCA}}_{\mathsf{SKAKEM}, \mathcal{A}}(\lambda) \leq 2 \cdot \left( \mathsf{Adv}^{\mathsf{SUF\text{-}CMA}}_{\mathsf{MAC}, \mathcal{B}_1}(\lambda) + \mathsf{Adv}^{\mathsf{PRF}}_{F, \mathcal{B}_2}(\lambda) \right) \leq \mathsf{negl}(\lambda).
\]
\end{proof}

\section{Discussion on RR-KEMs}
The anamorphic cryptographic framework proposed in this work inherently relies on the underlying Key Encapsulation Mechanism possessing the Randomness-Recoverable property. To demonstrate the broad applicability and practical impact of our design, we discuss various concrete instantiations of $\mathsf{KEM}_{RR}$. As summarized in Table~\ref{tab:instantiations}, we categorize these instantiations across two crucial dimensions: the underlying security model (Random Oracle Model vs. Standard Model) and the foundational hardness assumptions (Factoring/RSA, Discrete Logarithm, and Lattices/LPN).

\begin{table}[htbp]
    \centering
    \caption{Instantiations of RR-KEMs across different models and cryptographic assumptions.}
    \label{tab:instantiations}
    \renewcommand{\arraystretch}{1.3}
    \begin{tabular}{@{}lccc@{}}
        \toprule
        \textbf{Model} & \textbf{Factoring / RSA} & \textbf{Discrete Logarithm (DL)} & \textbf{Lattices / LPN} \\
        \midrule
        \textbf{ROM} & RSA-OAEP (PKCS\#1)\cite{rfc3447} & PSEC-KEM (ISO/IEC 18033-2)\cite{iso18033-2} & ML-KEM (FIPS 203) \cite{fips203}\\
        \textbf{Standard} & ATF-based schemes\cite{KMO10} & LTF-based schemes\cite{PW08} & Hybrid Encryption \cite{boyen2021secure} \\
        \bottomrule
    \end{tabular}
\end{table}

\subsection{Instantiations in the Random Oracle Model (ROM)}

In the ROM, IND-CCA secure KEMs are typically constructed by applying the FO transformation \cite{fujisaki1999secure} or its modern variants to a weakly secure PKE scheme. A core feature of modern FO transforms is \textit{derandomization} during encapsulation, which naturally yields the randomness-recoverability property.

\paragraph{RSA and Factoring-based Schemes.} 
RR-KEMs represent a foundational component in applied cryptography, their functionality often embedded within widely adopted standards without being explicitly designated as such. A canonical illustration of this principle is the key transport mechanism in legacy versions of the Transport Layer Security (TLS) ~\cite{rescorla2018transport}~\cite{blake2006transport}~\cite{eastlake2011transport} protocol. In versions such as TLS 1.2 ~\cite{dierks2008transport} and its predecessors, RSA-based key exchange cipher suites (e.g., \texttt{TLS\_RSA\_WITH\_AES\_256\_GCM\_SHA384}) were predominant. The protocol dictates that the client generates a random \textit{Pre-Master Secret}, encrypts it using the server's public RSA key, and transmits the resulting ciphertext. The server then decrypts this ciphertext to retrieve the original \textit{Pre-Master Secret}. This entire process is essentially an instance of a KEM, where the \textit{Pre-Master Secret} serves as the recoverable randomness.

The canonical cryptographic formalization of an RR-KEM is RSA-OAEP\cite{fujisaki2001rsa} \cite{kiltz2017instantiability}, standardized in PKCS\#1 \cite{rfc3447}. Its underlying encapsulation phase can be rigorously described by the following sequential steps:
\begin{enumerate}
    \item \textbf{Generation of Randomness:} The encapsulator generates a random byte string $r$, which constitutes the core randomness intended to be recovered by the recipient.
    \item \textbf{Key Derivation:} A standard Key Derivation Function~\cite{yao2005design}, such as HKDF ~\cite{krawczyk2010hmac}~\cite{krawczyk2010cryptographic}, is applied to the randomness to derive the session key $K = \mathsf{KDF}(r)$.
    \item \textbf{OAEP Padding:} Within the OAEP padding network, the message payload $m$ and the randomness $r$ are mutually masked via a Feistel-like structure using two hash functions (modeled as random oracles $G$ and $H$). Specifically, the network computes $s = m \oplus G(r)$ and $t = r \oplus H(s)$. These components are then concatenated to form a highly structured message block $X = s \parallel t$, whose length is matched to the RSA modulus.
    \item \textbf{RSA Encryption:} Finally, the encapsulator encrypts the formatted message block $X$ using the recipient's public RSA key $(N, e)$, yielding the final ciphertext $C = X^e \pmod N$.
\end{enumerate}

During decapsulation, after the receiver utilizes the RSA trapdoor (private key $d$) to decrypt the ciphertext and recover the padded payload $(s, t)$, it not only recovers the underlying message but, crucially, can deterministically extract the original masking randomness by computing $r = t \oplus H(s)$. Because this exact decoding mechanism guarantees flawless recovery of $r$, our anamorphic design can be directly integrated into legacy industrial systems deploying RSA-OAEP without requiring any modifications to the underlying cryptographic libraries.

\paragraph{Discrete Logarithm (DL) based Schemes.} 
Standard randomized DL-based schemes, such as textbook ElGamal is fundamentally non-randomness-recoverable. In these schemes, the ephemeral public key is computed as $R = g^r$ using a uniformly sampled randomness $r$. Extracting $r$ from $R$ requires solving the intractable Discrete Logarithm Problem (DLP).

However, DL-based KEMs inherently achieve the RR property when instantiated via modern FO transformations. A prominent standardized example is PSEC-KEM \cite{iso18033-2}, which is included in the ISO/IEC 18033-2 standard. In PSEC-KEM, the encapsulator derandomizes the process by deriving the ephemeral randomness deterministically from a hashed payload seed, i.e., $r = H(seed \parallel pk)$. During decapsulation, the receiver recovers the $seed$ using the DH shared secret and simply re-evaluates the hash to deterministically recover the exact randomness $r$. By bypassing the DLP bottleneck via FO-derandomization, schemes like PSEC-KEM serve as perfect $\mathsf{KEM}_{RR}$ substrates for our anamorphic design.

\paragraph{Lattice-based Schemes (NIST Post-Quantum Standard).} 
Following the NIST Post-Quantum Cryptography standardization, modern lattice-based KEMs, most notably ML-KEM (formerly CRYSTALS-Kyber, standardized as FIPS 203 \cite{fips203}), explicitly utilize post-2017 FO transformations with implicit rejection ($\mathsf{FO}^{\not\bot}$). In ML-KEM, the randomness vector $r$ used in the underlying LWE/MLWE encryption is deterministically derived via a random oracle query (SHAKE-256) taking the plaintext seed $m$ and the public key $pk$ as inputs: $r = G(m \parallel pk)$. During decapsulation, the decryptor recovers $m'$ and \textit{must} re-evaluate the hash to recover the exact randomness $r'$ for ciphertext verification via re-encryption. This mandatory re-encryption mechanism makes the FIPS 203 standard inherently randomness-recoverable, ensuring our framework is perfectly post-quantum ready without modifying the underlying encapsulation algorithms.

\subsection{Instantiations in the Standard Model}

While relying on random oracles is highly efficient in practice, constructing RR-KEMs strictly in the standard model is crucial for theoretical soundness. Fortunately, our framework can be robustly instantiated from well-established standard-model assumptions.

\paragraph{ATF and LTF-based Constructions.}
IND-CCA secure schemes constructed from Adaptive Trapdoor Functions (ATFs) \cite{KMO10} and Lossy Trapdoor Functions (LTFs) \cite{PW08} generally support explicit randomness recovery. In these constructions, the evaluation function $c = f_{pk}(x)$ takes an input $x$ that typically concatenates the message and the injected randomness (i.e., $x = m \parallel r$). The secret trapdoor allows the legitimate receiver to compute the strict inverse $f^{-1}_{sk}(c)$ and recover the complete pre-image $x$. Consequently, the decapsulation algorithm naturally extracts and outputs the exact randomness $r$ alongside the session key.

\paragraph{Lattices and LPN-based Constructions.}
In the post-quantum setting without random oracles, constructing CCA-secure schemes with extractable randomness is notoriously challenging. However, advancements in Lattices and Learning Parity with Noise (LPN) provide precise mathematical mechanisms for this. Notably, the standard-model CCA-secure schemes proposed in \cite{boyen2021secure} explicitly support randomness recovery. By leveraging specific trapdoor generation techniques or extractable hash proofs, these mechanisms can securely invert the LWE/LPN instances to recover the short error vectors, which serve as the encryption randomness $r$. By adopting these standard-model LPN/Lattice constructions, our anamorphic framework achieves robust post-quantum security entirely free from ROM heuristics.

\section{Conclusion}
In this paper, we formalized Public-Key and Symmetric-Key Anamorphic Key Encapsulation Mechanisms (PKAKEM and SKAKEM) and proposed generic constructions instantiable from any randomness-recoverable KEM. We rigorously proved their anamorphic security, ensuring that covert ciphertexts successfully bypass dictator detection. Crucially, our work achieves the realization of strong IND-CCA (sIND-CCA) security for covert messages within anamorphic systems. This guarantees strict confidentiality against chosen-ciphertext attacks, even if the dictator possesses the standard decapsulation key. With all security guaranties tightly proven in the standard model and demonstrating seamless compatibility with deployed standards such as RSA-OAEP and ML-KEM, our framework bridges the gap between provable security and real-world deployment.


\begin{thebibliography}{8}

\bibitem{rogaway2015moral}
  Rogaway, P.: The moral character of cryptographic work. Cryptology ePrint Archive (2015)

\bibitem{persiano2022anamorphic}
  Persiano, G., Phan, D.H., Yung, M.: Anamorphic encryption: Private communication against a dictator. In: Advances in Cryptology–EUROCRYPT 2022: 41st Annual International Conference on the Theory and Applications of Cryptographic Techniques, Proceedings, Part II. pp. 34–63. Springer (2022)

\bibitem{wang2023sender}
Wang, Y., Chen, R., Huang, X., Yang, G., Yung, M.: Sender-anamorphic encryption reformulated: Achieving robust and generic constructions. In: International Conference on the Theory and Application of Cryptology and Information Security. pp. 135-167. Springer (2023)

\bibitem{banfi2024anamorphic}
  Banfi, F., Gegier, K., Hirt, M., Maurer, U., Rito, G.: Anamorphic encryption, revisited. In: Annual International Conference on the Theory and Applications of Cryptographic Techniques. pp. 3-32. Springer (2024)

\bibitem{kutylowski2023self}
  Kutylowski, M., Persiano, G., Phan, D.H., Yung, M., Zawada, M.: The self-anti-censorship nature of encryption: On the prevalence of anamorphic cryptography. Proceedings on Privacy Enhancing Technologies \textbf{2023}(4), 170--183 (2023)

\bibitem{persiano2024public}
Persiano, G., Phan, D.H., Yung, M.: Public-key anamorphism in (CCA-secure) public-key encryption and beyond. In: Annual International Cryptology Conference. pp. 422--455. Springer (2024)

\bibitem{catalano2024anamorphic}
Catalano, D., Giunta, E., Migliaro, F.: Anamorphic encryption: New constructions and homomorphic realizations. In: Advances in Cryptology–EUROCRYPT 2024: 41st Annual International Conference on the Theory and Applications of Cryptographic Techniques, Proceedings, Part II. pp. 33--62. Springer (2024)

\bibitem{banerjee2025simple}
Banerjee, S., Pal, T., Rupp, A., Slamanig, D.: Simple Public Key Anamorphic Encryption and Signature using Multi-Message Extensions. Cryptology ePrint Archive (2025)

\bibitem{canetti2003relaxing}
Canetti, R., Krawczyk, H., Nielsen, J.B.: Relaxing chosen-ciphertext security. In: Advances in Cryptology–CRYPTO 2003: 23rd Annual International Cryptology Conference, Proceedings. pp. 565--582. Springer (2003)

\bibitem{faonio2020improving}
Faonio, A., Fiore, D.: Improving the efficiency of re-randomizable and replayable CCA secure public key encryption. In: International Conference on Applied Cryptography and Network Security. pp. 271--291. Springer (2020)

\bibitem{fujisaki1999secure}
Fujisaki, E., Okamoto, T.: Secure integration of asymmetric and symmetric encryption schemes. In: Annual International Cryptology Conference. pp. 537--554. Springer (1999)

\bibitem{canetti2004random}
Canetti, R., Goldreich, O., Halevi, S.: The random oracle methodology, revisited. Journal of the ACM (JACM) \textbf{51}(4), 557--594 (2004)

\bibitem{don2019security}
Don, J., Fehr, S., Majenz, C., Schaffner, C.: Security of the Fiat-Shamir transformation in the quantum random-oracle model. In: Annual International Cryptology Conference. pp. 356--383. Springer (2019)

\bibitem{canetti2004chosen}
Canetti, R., Halevi, S., Katz, J.: Chosen-ciphertext security from identity-based encryption. In: International Conference on the Theory and Applications of Cryptographic Techniques. pp. 207-222. Springer (2004)

\bibitem{choi2025unified}
Choi, W., Collins, D., Liu, X., Zikas, V.: A unified treatment of anamorphic encryption. Cryptology ePrint Archive (2025)

\bibitem{abe2005tag}
Abe, M., Gennaro, R., Kurosawa, K., Shoup, V.: Tag-KEM/DEM: A new framework for hybrid encryption and a new analysis of Kurosawa-Desmedt KEM. In: Annual International Conference on the Theory and Applications of Cryptographic Techniques. pp. 128--146. Springer (2005)

\bibitem{nagao2005universally}
Nagao, W., Manabe, Y., Okamoto, T.: A universally composable secure channel based on the KEM-DEM framework. In: Theory of Cryptography Conference. pp. 426--444. Springer (2005)

\bibitem{chen2020subvert}
Chen, R., Huang, X., Yung, M.: Subvert KEM to break DEM: practical algorithm-substitution attacks on public-key encryption. In: International Conference on the Theory and Application of Cryptology and Information Security. pp. 98--128. Springer (2020)

\bibitem{dent2003designer}
Dent, A.W.: A designer’s guide to KEMs. In: IMA International Conference on Cryptography and Coding. pp. 133--151. Springer (2003)

\bibitem{saito2018tightly}
Saito, T., Xagawa, K., Yamakawa, T.: Tightly-secure key-encapsulation mechanism in the quantum random oracle model. In: Annual International Conference on the Theory and Applications of Cryptographic Techniques. pp. 520--551. Springer (2018)

\bibitem{luby1988how}
Luby, M., Rackoff, C.: How to construct pseudorandom permutations from pseudorandom functions. SIAM Journal on Computing \textbf{17}(2), 373--386 (1988)

\bibitem{bellare2000security}
Bellare, M., Kilian, J., Rogaway, P.: The security of the cipher block chaining message authentication code. Journal of Computer and System Sciences \textbf{61}(3), 362--399 (2000)

\bibitem{dodis2012message}
Dodis, Y., Kiltz, E., Pietrzak, K., Wichs, D.: Message authentication, revisited. In: Annual International Conference on the Theory and Applications of Cryptographic Techniques. pp. 355--374. Springer (2012)

\bibitem{canetti2003forward}
Canetti, R., Halevi, S., Katz, J.: A forward-secure public-key encryption scheme. In: International Conference on the Theory and Applications of Cryptographic Techniques. pp. 255--271. Springer (2003)

\bibitem{moller2004public}
Möller, B.: A public-key encryption scheme with pseudo-random ciphertexts. In: European Symposium on Research in Computer Security. pp. 335--351. Springer (2004)

\bibitem{boneh2017constrained}
Boneh, D., Kim, S., Wu, D.J.: Constrained keys for invertible pseudorandom functions. Theory of Cryptography Conference. pp. 237--263. Springer (2017)

\bibitem{boyen2021secure}
Boyen, X., Izabachène, M., Li, Q.: Secure Hybrid Encryption in the Standard Model from Hard Learning. In: Post-Quantum Cryptography: 12th International Workshop, PQCrypto 2021, Daejeon, South Korea, July 20--22, 2021, Proceedings. pp. 399--418. Springer (2021)

\bibitem{rescorla2018transport}
Rescorla, E.: The Transport Layer Security (TLS) Protocol Version 1.3. RFC 8446 (2018)

\bibitem{blake2006transport}
Blake-Wilson, S., Nystrom, M., Hopwood, D., Mikkelsen, J., Wright, T.: Transport Layer Security (TLS) Extensions. RFC 4366 (2006)

\bibitem{eastlake2011transport}
Eastlake, D.: Transport Layer Security (TLS) Extensions: Extension Definitions. RFC 6066 (2011)

\bibitem{dierks2008transport}
Dierks, T., Rescorla, E.: The Transport Layer Security (TLS) Protocol Version 1.2. RFC 5246 (2008)

\bibitem{fujisaki2001rsa}
Fujisaki, E., Okamoto, T., Pointcheval, D., Stern, J.: RSA-OAEP is secure under the RSA assumption. In: Annual International Cryptology Conference. pp. 260--274. Springer (2001)

\bibitem{kiltz2017instantiability}
Kiltz, E., O'Neill, A., Smith, A.: Instantiability of RSA-OAEP under chosen-plaintext attack. Journal of Cryptology \textbf{30}(3), 889-919 (2017)

\bibitem{rfc3447}
Jonsson, J., Kaliski, B.: Public-Key Cryptography Standards (PKCS) \#1: RSA Cryptography Specifications Version 2.1. RFC 3447, RFC Editor (2003)

\bibitem{yao2005design}
Yao, F.F., Yin, Y.L.: Design and analysis of password-based key derivation functions. In: Cryptographers’ Track at the RSA Conference. pp. 245-261. Springer (2005)

\bibitem{krawczyk2010hmac}
Krawczyk, H., Eronen, P.: HMAC-based Extract-and-Expand Key Derivation Function (HKDF). RFC 5869 (2010)

\bibitem{krawczyk2010cryptographic}
Krawczyk, H.: Cryptographic extraction and key derivation: The HKDF scheme. In: Annual Cryptology Conference. pp. 631--648. Springer (2010)

\bibitem{iso18033-2}
ISO/IEC: ISO/IEC 18033-2:2006: Information technology -- Security techniques -- Encryption algorithms -- Part 2: Asymmetric ciphers. International Organization for Standardization, Geneva, Switzerland (2006)

\bibitem{fips203}
National Institute of Standards and Technology: Module-Lattice-Based Key-Encapsulation Mechanism Standard. Federal Information Processing Standards Publication (FIPS) 203, U.S. Department of Commerce (2024)

\bibitem{KMO10}
Kiltz, E., Mohassel, P., O'Neill, A.: Adaptive trapdoor functions and chosen-ciphertext security. In: Advances in Cryptology--EUROCRYPT 2010. pp. 673--692. Springer (2010)

\bibitem{PW08}
Peikert, C., Waters, B.: Lossy trapdoor functions and their applications. In: Proceedings of the 40th annual ACM symposium on Theory of computing (STOC). pp. 120--129. ACM (2008)





\end{thebibliography}
\end{document}